\documentclass[12pt,draftcls,onecolumn]{IEEEtran}
\IEEEoverridecommandlockouts

\usepackage{amsmath,amssymb,amsfonts}
\usepackage{graphicx}
\usepackage{textcomp}
\usepackage{color}
\usepackage{graphicx}      
\usepackage{cite}
\usepackage{bm}
\usepackage{amsthm}
\usepackage{epsfig}
\usepackage{epstopdf}
\usepackage{float}
\usepackage{mathrsfs}
\usepackage{xtab}
\usepackage{color}
\usepackage{ifthen}
\usepackage{setspace}
\usepackage{algorithm}
\usepackage{algorithmic}
\usepackage{subcaption}
\usepackage[dvipsnames]{xcolor}

\allowdisplaybreaks[4]
\allowdisplaybreaks
\newtheorem{rem}{Remark}
\newtheorem{lem}{Lemma}
\newtheorem{theo}{Theorem}
\newtheorem{coro}{Corollary}
\newtheorem{ass}{Assumption}
\newtheorem{prop}{Proposition}
\newtheorem{defi}{Definition}

\ifCLASSINFOpdf
\else
\fi
\hyphenation{op-tical net-works semi-conduc-tor}

\begin{document}
\allowdisplaybreaks

\title{Control Reconfiguration of Dynamical Systems for Improved Performance via Reverse- and Forward-engineering}
%
%
%

\author{Han~Shu,
        Xuan~Zhang*,
        Na~Li,
and~Antonis~Papachristodoulou~\IEEEmembership{Fellow,~IEEE}
\thanks{This work was supported by Tsinghua-Berkeley Shenzhen Institute Research Start-Up Funding. H. Shu and X. Zhang are with the Smart Grid and Renewable Energy Laboratory, Tsinghua-Berkeley Shenzhen Institute, Shenzhen, Guangdong 518055, China (email: hs2226@cornell.edu, xuanzhang@sz.tsinghua.edu.cn).  \textit{Corresponding Author: X. Zhang.}}
\thanks{N. Li is with the School of Engineering and Applied Sciences, Harvard University, Cambridge, MA, USA (email: nali@seas.harvard.edu).}
\thanks{ A. Papachristodoulou is with the Department of Engineering Science, University of Oxford, Oxford, UK (email: antonis@eng.ox.ac.uk).}}

\maketitle

\begin{abstract}
This paper presents a control reconfiguration approach to improve the performance of two classes of dynamical systems. Motivated by recent research on re-engineering cyber-physical systems, we propose a three-step control retrofit procedure. First, we reverse-engineer a dynamical system to dig out an optimization problem it actually solves. Second, we forward-engineer the system by applying a corresponding faster algorithm to solve this optimization problem. Finally, by comparing the original and accelerated dynamics, we obtain the implementation of the redesigned part (the extra dynamics). As a result, the convergence rate/speed or transient behavior of the given system can be improved while the system control structure is maintained. Internet congestion control and distributed proportional-integral (PI) control, as applications in the two different classes of target systems, are used to show the effectiveness of the proposed approach.
\end{abstract}

\begin{IEEEkeywords}
Control redesign, reverse-engineering, convex optimization, accelerated algorithm, dynamical systems.
\end{IEEEkeywords}

%
\IEEEpeerreviewmaketitle

\begin{spacing}{1.0}
\section{Introduction}

\IEEEPARstart{C}{yber}-physical systems (CPSs) integrate, coordinate and monitor the operations of both a physical process and the cyber world~\cite{rajkumar2010cyber}. They 
are decisive in supporting fundamental infrastructures and smart applications including automotive systems, transportation systems, smart grids, etc. However, although those CPSs were advanced at the time when being constructed, they can be either economical inefficient or energy inefficient from today's viewpoint. 
For example, aging electricity distribution infrastructures are becoming less reliable and less efficient~\cite{kim2013overview}. Also, there are societal and industrial needs for better control of CPSs. For example, the increasing penetration of renewable energy poses threats to the reliability of power grids: it is essential to design better control to quickly attenuate large fluctuations caused by those energy sources. Furthermore, autonomous vehicles and mobile robots, being operated in an uncertain environment without complete information, require better control for more safety and reliability. 

Usually, better control can be achieved from two perspectives. One way is rebuilding a new controller for the whole system. For example,~\cite{deveci2016performance} shows that by incorporating numerical modeling and simulation to design the controller for a photovoltaic system, system performance and robustness can be improved. The other way is to redesign the existing controller by adding extra dynamics while maintaining the control structure of the existing controller, i.e., if the original control is centralized/distributed, then the redesigned control is also centralized/distributed. This is referred to as control reconfiguration and can be simply realized, for example, by using additional information produced by newly added sensors, without affecting the structure of the controlled system. For instance,~\cite{zhang2014improving} proposes a modification approach based on a penalty method for improving the performance and robustness to delays of Internet congestion control protocols, and \cite{nevsic2005lyapunov} redesigns controllers by adding extra terms obtained based on continuous-time systems and Lyapunov functions to enhance stability and robustness. In general, there are tradeoffs between these two methods: rebuilding a new controller can achieve a better result by implementing the state-of-the-art sensing, communication and computing technologies but with the expense of complexity and high investment, while redesigning the existing controller can be easier and  more convenient though the effect may not be as good as rebuilding. 

Instead of designing a new controller, this paper focuses on modifying the existing control from an optimization perspective to improve the performance of the whole system. Inspired by recent research on re-engineering typical CPSs~\cite{Low1999Optimal,li2015connecting,chen2016reverse,zhang2018distributed,zhou2020reverse}, this work extends to study control of general CPSs based on a reverse- and forward-engineering framework, which serves as a tool to bridge the gap between engineering CPS applications and existing theoretical results on optimization. As will be shown later, the proposed framework shows great potential to handle large-scale system cases.

The idea of redesign using a reverse- and forward-engineering framework for optimality has been introduced for over ten years~\cite{chiang2007layering}. From existing protocols designed based on engineering instincts, utility functions are implicitly determined and can be extracted via reverse-engineering. Other works consider various congestion control protocols as distributed algorithms for solving network utility maximization problems~\cite{kelly1998rate,Kunniyur2000End,Low1999Optimal}. Based on the insights obtained from reverse-engineering, forward-engineering systematically improves the protocols~\cite{chiang2007layering,Low1999Optimal}. Recently, inspired by this idea,~\cite{li2015connecting} connects automatic generation control (AGC) and economic dispatch by reverse-engineering AGC to improve power system economic efficiency, and ~\cite{zhang2015distributed,zhang2018distributed} develop a reverse- and forward-engineering framework to redesign control for improved efficiency, achieving optimal steady-state performance in network systems.

Nevertheless, little attention has been paid to improve system performance in terms of convergence rates/speeds and transient behaviors. This paper utilizes the reverse- and forward-engineering framework, together with several acceleration techniques, to systematically improve the performance of two classes of dynamical systems in discrete time. These systems include but are not limited to existing protocols and controlled systems, e.g., Internet congestion control, distributed proportional-integral (PI) control.

The main contribution of this paper is twofold.
\begin{itemize}
\item  A control reconfiguration approach is proposed to systematically improve the performance of two classes of general dynamical systems while maintaining the original control structure. This is realized by designing extra dynamics and adding it to the original control based on a reverse- and forward-engineering framework. 

\item 
Two standard acceleration methods are utilized in each class of dynamical systems for control reconfiguration and are theoretically proven to exhibit improved convergence rates.



\end{itemize}

The rest of this paper is organized as follows. Section~\ref{se:prelim} presents Class-$\mathcal{O}$\footnotemark[1] and Class-$\mathcal{S}$\footnotemark[1] as our target systems, together with two motivating examples. Section~\ref{se:redesign} presents the reconfiguration steps for systems in Class-$\mathcal{O}$ and Class-$\mathcal{S}$ and analyzes the convergence rates of the original and redesigned systems. Section~\ref{se:simu} provides simulation results of the motivating examples to illustrate the effectiveness of the reconfiguration framework. Section~\ref{se:conclusion} concludes the paper.
\footnotetext[1]{Notation $\mathcal{O}$ stands for \emph{optimization algorithms} in contrast to $\mathcal{S}$ for \emph{saddle-point algorithms}.}


\noindent\textbf{Notations:} Throughout this paper, we use upper case roman letters to denote matrices. 
Let $\text{diag}\{\star\}$ denote the diagonal matrix with corresponding entries $\star$ on the main diagonal. Let $A \succeq 0$ ($A \succ 0$) denote that a square matrix is positive semi-definite (positive definite). For two symmetric matrices $A_1$ and $A_2$, notation $A_1 \preceq A_2$ implies that $A_2 - A_1$ is positive semi-definite. Let $\left\langle { \cdot {\rm{ }},{\rm{ }} \cdot } \right\rangle$ represent the Euclidean inner product, and let $\left\| {{\rm{ }} \cdot {\rm{ }}} \right\|$ denote the Euclidean norm for vectors and the spectrum norm for matrices. Denote by $\mathbb{Z}^+$ the set of positive integers, by $\mathbb{R}^n$ the $n$-dimensional Euclidean space and by $A \in \mathbb{R}^{m \times n}$ an $m \times n$ real matrix. Let $x^T$ and $\nabla f$ denote the transpose of $x$ and the gradient of $f$ (as a column vector). 
Let ${\sigma _{\max }(B)}$/${\sigma _{\min }(B) }$ denote the maximum/minimum singular values of $B$. Let ${\lambda(A)}$ denote the eigenvalues of a square matrix $A$. Denote by $\bf{H}$$_f$ the Hessian matrix of $f$ with elements $\bf{H}$$_{i,j}=\frac{\partial^2 f}{\partial x_i \partial x_j}$. 

\section{Preliminaries} \label{se:prelim}
In this section, we introduce two special classes of dynamical systems as our focus, i.e., Class-$\mathcal{O}$ and Class-$\mathcal{S}$, and present one typical example in each class as the motivating applications. In particular, we show what kinds of conditions are required to determine whether or not a system belongs to these classes (when reverse-engineering works)~\cite{zhang2018distributed}, mainly for the discrete-time linear case. 

Reverse-engineering seeks a proper optimization problem inherently from given dynamical equations. Different from the previous optimization-to-algorithm framework, it generates a reverse flow, i.e., algorithm-to-optimization.

Consider a linear time-invariant (LTI) system
\begin{align}
x_{k+1}=Ax_k+Cw \label{eq:LTI-sys}
\end{align}where $x_k\in\mathbb{R}^{n}$ is the state vector, $A \in \mathbb{R}^{n \times n}$, $C \in \mathbb{R}^{n \times p}$ and $w\in\mathbb{R}^{p}$ is the exogenous input, e.g., disturbances. Note that $w$ can be constant or time-varying. In general, any given discrete-time LTI closed-loop system with either static feedback or dynamic feedback can be rearranged to fit~\eqref{eq:LTI-sys}.

\subsection{Target Systems: Class-$\mathcal{O}$}

\noindent\textbf{Class-$\mathcal{O}$}:
System~\eqref{eq:LTI-sys} belongs to Class-$\mathcal{O}$ if there exists a convex function $f(x): \mathbb{R}^{n}\rightarrow\mathbb{R}$ and a positive definite matrix $P \in \mathbb{R}^{n \times n}$, such that the set $\{\nabla f(x)=\textbf{0}\}$ is nonempty and system~\eqref{eq:LTI-sys} is a gradient descent algorithm to solve an unconstrained convex optimization problem $\min_{x} f(x)$, i.e., $x_{k+1} = x_k - P \nabla f(x) |_{x=x_k}$.

For linear systems in Class-$\mathcal{O}$, the associated $f$ must be a convex quadratic function, i.e., $f=\frac{1}{2}x^TQx+x^TR(Cw)+s(w)$ for some matrices $Q \succeq 0$ and $R$, and $s(w)$ stands for terms only related to  $w$. Note that $f$ is regarded as a function of the variable $x$ and $w$ is treated as a parameter. Therefore, system~\eqref{eq:LTI-sys} belongs to Class-$\mathcal{O}$ if and only if there exist $P \succ 0 $ and $Q \succeq 0$ such that $A=I_n-PQ$ and $R=-P^{-1}$ hold. This results in the following theorem~\cite{zhang2018distributed}. 

\begin{theo}\label{theo:class-O}
Let $w$ be constant in~\eqref{eq:LTI-sys} and the set $\{(A-I_n)x+Cw=\mathbf{0}\}$ be nonempty. System~\eqref{eq:LTI-sys} belongs to Class-$\mathcal{O}$ if and only if system~\eqref{eq:LTI-sys} is marginally or asymptotically stable\footnote[2]{All the eigenvalues of $A$ have negative or zero real parts and all the Jordan blocks corresponding to eigenvalues with zero real parts are $1 \times 1$.}, all the eigenvalues of $I_n-A$ are real numbers and $I_n-A$ is diagonalizable. 
\end{theo}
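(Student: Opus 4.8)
The plan is to translate the Class-$\mathcal{O}$ membership condition into the existence of matrices $P \succ 0$ and $Q \succeq 0$ with $A = I_n - PQ$, and then characterize exactly when such a factorization exists in terms of the spectral properties of $M := I_n - A$. So the core claim to establish is: there exist $P \succ 0$, $Q \succeq 0$ with $M = PQ$ if and only if $M$ is diagonalizable with real nonnegative eigenvalues. First I would handle the ``only if'' direction: if $M = PQ$ with $P \succ 0$, $Q \succeq 0$, then $P^{-1/2} M P^{1/2} = P^{1/2} Q P^{1/2} =: \tilde{Q} \succeq 0$, so $M$ is similar to a symmetric positive semidefinite matrix; hence $M$ is diagonalizable and its eigenvalues (being those of $\tilde Q$) are real and nonnegative. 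Translating back, $I_n - A$ is diagonalizable, all eigenvalues of $I_n - A$ are real, and since $\lambda(I_n-A) \geq 0$ we get $\mathrm{Re}\,\lambda(A) \le 1$ — but I need to sharpen this to the stated stability condition, which is where the nonemptiness of $\{(A-I_n)x + Cw = \mathbf{0}\}$ and the constancy of $w$ come in.

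For the stability refinement, I would argue as follows. The condition that $M = I_n - A$ is diagonalizable with real nonnegative eigenvalues means $A$ is diagonalizable with real eigenvalues all $\le 1$; marginal/asymptotic stability in the discrete-time sense then requires $|\lambda(A)| \le 1$ with eigenvalues on the unit circle having $1\times 1$ Jordan blocks. Since the eigenvalues are real and $\le 1$, the only one on the unit circle is $\lambda = 1$ (we must rule out $\lambda = -1$, i.e., $\lambda(M) = 2$, which is allowed by ``nonnegative'' — so I expect the actual argument restricts to $0 \le \lambda(M) < 2$, or the paper's notion of stability is the continuous-time-flavored one in footnote 2 applied to the associated ODE $\dot x = (A-I_n)x + Cw$, whose system matrix is $-M$ with eigenvalues $\le 0$). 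I would reconcile this by noting the reverse-engineered gradient system $x_{k+1} = x_k - P\nabla f$ is the forward-Euler discretization whose underlying continuous flow $\dot x = -P\nabla f = -PQx - PR(Cw)$ has system matrix $-PQ = A - I_n$; footnote 2's stability condition is precisely that $-M = A - I_n$ has eigenvalues with nonpositive real part and semisimple zero eigenvalue, which is equivalent to $M = I_n - A$ having real nonnegative eigenvalues and being diagonalizable.

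For the ``if'' direction, assume $I_n - A$ is diagonalizable, all its eigenvalues real and nonnegative (equivalently the footnote-2 stability plus real-diagonalizable hypotheses). I would construct $P \succ 0$ and $Q \succeq 0$ with $A = I_n - PQ$ explicitly: write $M = I_n - A = V \Lambda V^{-1}$ with $\Lambda = \mathrm{diag}\{\lambda_i\}$, $\lambda_i \ge 0$. Set $P := V V^T \succ 0$ and $Q := V^{-T} \Lambda V^{-1}$; then $PQ = V V^T V^{-T} \Lambda V^{-1} = V \Lambda V^{-1} = M$, and $Q = (V^{-1})^T \Lambda V^{-1} \succeq 0$ since $\Lambda \succeq 0$. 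Then take $f(x) = \tfrac12 x^T Q x - x^T P^{-1}(Cw) + s(w)$ with any scalar term $s(w)$; the nonemptiness hypothesis $\{(A-I_n)x + Cw = \mathbf{0}\} = \{-PQx + Cw = \mathbf{0}\} = \{Qx = P^{-1}Cw\}$ guarantees $\{\nabla f(x) = \mathbf{0}\}$ is nonempty, and $x_{k+1} = x_k - P\nabla f(x_k) = x_k - P(Qx_k - P^{-1}Cw) = Ax_k + Cw$ recovers \eqref{eq:LTI-sys}, so the system is in Class-$\mathcal{O}$.

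The main obstacle I anticipate is the bookkeeping around \emph{which} stability notion is meant and ensuring the eigenvalue sign/magnitude conditions line up: ``marginally or asymptotically stable'' together with ``real eigenvalues of $I_n - A$'' must be shown equivalent to ``$\lambda(I_n-A) \ge 0$ and diagonalizable,'' and the subtlety is that the natural constructive argument gives $\lambda(I_n-A) \ge 0$ for free but the stated discrete-time stability only forbids Jordan blocks at $|\lambda(A)| = 1$; I would reconcile these by leaning on footnote 2's definition (which is really a statement about the generator $A - I_n$ of the associated continuous-time gradient flow) and on the fact that the relevant object for convexity of $f$ is $Q = $ the Hessian $\succeq 0$, whose eigenvalues are exactly $\lambda(I_n - A)$ up to the similarity above. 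Everything else is routine linear algebra, so I would keep those computations terse.
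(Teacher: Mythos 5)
The paper does not actually prove this theorem in the text --- it is imported from the cited reference --- but your argument is correct and follows the standard route used there: reduce Class-$\mathcal{O}$ membership to the factorization $I_n-A=PQ$ with $P\succ0$, $Q=Q^T\succeq0$, show via the similarity $P^{-1/2}(PQ)P^{1/2}=P^{1/2}QP^{1/2}\succeq0$ that such a factorization forces $I_n-A$ diagonalizable with real nonnegative eigenvalues, and conversely construct $P=VV^T$, $Q=V^{-T}\Lambda V^{-1}$ from a diagonalization. You are also right to flag the stability bookkeeping: footnote 2 as literally written (eigenvalues of $A$ with nonpositive real parts) is a carryover from the continuous-time setting and would make the ``only if'' direction false for the discrete-time system (e.g.\ $A=\tfrac12 I_n$ is in Class-$\mathcal{O}$ but violates it); the condition must be read as applying to the generator $A-I_n$, exactly as you reconcile it, after which the three stated conditions are equivalent to ``$I_n-A$ diagonalizable with real nonnegative eigenvalues'' and your proof closes.
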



\subsection{Target Systems: Class-$\mathcal{S}$}


\noindent\textbf{Class-$\mathcal{S}$}: System~\eqref{eq:LTI-sys} belongs to Class-$\mathcal{S}$ if there exists a function $f(x^{(1)},x^{(2)}):\mathbb{R}^{n}\rightarrow \mathbb{R}$ and positive definite matrices $P_{x^{(1)}}$\footnotemark[3], $P_{x^{(2)}}$\footnotemark[3] such that ${\bf{H}}_{f(x^{(1)})}\footnotemark[3] = \mathbf{0}$, ${\bf{H}}_{f(x^{(2)})} \succeq 0$, the saddle-point set $\{\nabla f(x)=\mathbf{0}\}$ is nonempty, and~\eqref{eq:LTI-sys} is a primal-dual gradient algorithm to solve $\max_{x^{(1)}} \min_{x^{(2)}} f$, i.e., $x_{k+1}=x_k + \text{diag}\{P_{x^{(1)}}, -P_{x^{(2)}}\}\nabla f|_{x=x_k}$. \footnotetext[3]{Here $P_{x^{(1)}}$/$P_{x^{(2)}}$ is a positive definite matrix with dimension in accordance with $x^{(1)}$/ $x^{(2)}$; $f(x^{(1)})$ means $f$ is regarded as a function of the variable $x^{(1)}$ and $x^{(2)}$ is treated as a parameter.}
\vspace{6pt}

In the above definition, state $x$ is partitioned into $x^{(1)}$ and $x^{(2)}$, and $f$ is linear in $x^{(1)}$ (similar to that the Lagrangian function of a constrained optimization problem is linear in the dual variables). Accordingly, we rearrange system~\eqref{eq:LTI-sys} as
\begin{align}\label{equ:linearsystemsplit}
\underbrace {\left[ {\begin{array}{*{20}{c}}
{{x_{k+1}^{(1)}}}\\
{{x_{k+1}^{(2)}}}
\end{array}} \right]}_{x_{k+1}} = \underbrace {\left[ {\begin{array}{*{20}{c}}
{{A_{11}}}&{{A_{12}}}\\
{{A_{21}}}&{{A_{22}}}
\end{array}} \right]}_A \underbrace {\left[ {\begin{array}{*{20}{c}}
{x_k^{(1)}}\\
{x_k^{(2)}}
\end{array}} \right]}_{x_k}+Cw
\end{align}
where $x_k^{(1)}\in\mathbb{R}^{n_1}$, $x_k^{(2)}\in\mathbb{R}^{n_2}$ and $n_1+n_2=n$. For linear systems in Class-$\mathcal{S}$, the associated function $f$ must be a convex quadratic function, i.e.,
\begin{align}\label{equ:classSopt}
f=\frac{1}{2}
x^T\underbrace{\left[ {\begin{array}{*{20}{c}}
{{Q_{11}}}&{{Q_{12}}}\\
{Q_{12}^T}&{{Q_{22}}}
\end{array}} \right]}_Qx+x^TR(Cw)+s(w)
\end{align}
 where $Q_{11}=\mathbf{0}\in\mathbb{R}^{n_1\times n_1}$, $Q_{22}\in\mathbb{R}^{n_2\times n_2}$ is symmetric and positive semi-definite ($f$ is  linear in $x^{(1)}$ and convex in $x^{(2)}$), and $Q_{12}\in\mathbb{R}^{n_1\times n_2}$. According to the definition of Class-$\mathcal{S}$, the following theorem is obtained.

\begin{theo}\label{theo:class-S}
Let $w$ be constant in~\eqref{equ:linearsystemsplit} and the set $\{(A-I_n)x+Cw=\mathbf{0}\} $ be nonempty. System~\eqref{equ:linearsystemsplit} belongs to Class-$\mathcal{S}$ if and only if the following conditions are satisfied: (i)  system~\eqref{equ:linearsystemsplit} is marginally or asymptotically stable; (ii) the eigenvalues of $A_{11}-I_{n_1}$ and $A_{22}-I_{n_2}$ are non-positive real; (iii) $A_{11}-I_{n_1}$ and $A_{22}-I_{n_2}$ are diagonalizable with the diagonal canonical forms given by $A_{11}-I_{n_1}=J_1\Lambda_1J_1^{-1}, J_1\in\mathbb{R}^{n_1\times n_1}, A_{22}-I_{n_2}=J_2\Lambda_2J_2^{-1}, J_2\in\mathbb{R}^{n_2\times n_2}$, and there exist $V_1, V_2$ that
\begin{align}\label{equ:SDPfeasibility}
&(J_1^{-1})^TV_1J_1^{-1}A_{12}+A_{21}^T(J_2^{-1})^TV_2J_2^{-1}=\mathbf{0} \nonumber\\
&V_1\Lambda_1=\Lambda_1V_1,\text{ }V_2\Lambda_2=\Lambda_2V_2 \nonumber\\
&V_1\prec0,\text{ }V_2\prec0.
\end{align}
\end{theo}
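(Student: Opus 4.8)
The plan is to characterize Class-$\mathcal{S}$ membership by translating the algebraic conditions in the definition into spectral/structural conditions on $A$, mirroring the proof of Theorem~\ref{theo:class-O} but with the extra bookkeeping caused by the saddle-point (max-min) structure. Recall that by the discussion preceding the theorem, system~\eqref{equ:linearsystemsplit} is in Class-$\mathcal{S}$ if and only if there exist $P_{x^{(1)}}\succ0$, $P_{x^{(2)}}\succ0$ and a quadratic $f$ as in~\eqref{equ:classSopt} with $Q_{11}=\mathbf{0}$ and $Q_{22}\succeq0$ such that the primal-dual iteration $x_{k+1}=x_k+\mathrm{diag}\{P_{x^{(1)}},-P_{x^{(2)}}\}\nabla f|_{x=x_k}$ coincides with~\eqref{equ:linearsystemsplit}. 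Writing out $\nabla f$ blockwise, this matching forces
\begin{align}\label{equ:blockmatch}
A_{11}-I_{n_1}=\mathbf{0}? \text{ no: } A_{11}-I_{n_1}=P_{x^{(1)}}Q_{11}=\mathbf{0},\ A_{12}=P_{x^{(1)}}Q_{12},\ A_{21}-\text{part}=-P_{x^{(2)}}Q_{12}^T,\ A_{22}-I_{n_2}=-P_{x^{(2)}}Q_{22}
\end{align}
together with $R=-\mathrm{diag}\{P_{x^{(1)}},P_{x^{(2)}}\}^{-1}$ (up to the appropriate sign convention used in~\eqref{equ:classSopt}); here I would be careful to track exactly which block of $A_{21}$ and $A_{12}$ comes from $Q_{12}$ versus $Q_{12}^T$, since this is where the coupling constraint in~\eqref{equ:SDPfeasibility} originates. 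Thus the proof reduces to showing that such $P_{x^{(1)}}, P_{x^{(2)}}, Q_{11}=\mathbf{0}, Q_{22}\succeq0, Q_{12}$ exist precisely under conditions (i)--(iii).

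For the necessity direction I would argue as follows. From $A_{11}-I_{n_1}=-P_{x^{(1)}}Q_{11}$ with $Q_{11}=\mathbf 0$ one does \emph{not} immediately get $A_{11}=I_{n_1}$; rather the correct reading (as in the Class-$\mathcal O$ proof) is $A_{11}-I_{n_1}=-P_{x^{(1)}}Q_{11}$ only if $x^{(1)}$ has no self-coupling, so I would instead use the general fact that a product $PQ$ with $P\succ0$, $Q$ symmetric is diagonalizable with real eigenvalues (because $P^{1/2}QP^{1/2}$ is symmetric and similar to $PQ$), which forces $A_{11}-I_{n_1}$ and $A_{22}-I_{n_2}$ to be diagonalizable with real eigenvalues — giving the first half of (ii) and (iii). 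The sign constraints $Q_{11}\succeq 0$ (hence $Q_{11}=\mathbf 0$ since it is also $\preceq 0$ by the linearity-in-$x^{(1)}$ requirement, so actually $\Lambda_1=\mathbf 0$) and $Q_{22}\succeq0$ translate, via $A_{22}-I_{n_2}=-P_{x^{(2)}}Q_{22}$, into the eigenvalues of $A_{22}-I_{n_2}$ being non-positive; marginal/asymptotic stability (i) then follows from feasibility of the saddle-point iteration. The coupling conditions~\eqref{equ:SDPfeasibility} I would derive by diagonalizing $A_{11}-I_{n_1}=J_1\Lambda_1 J_1^{-1}$, $A_{22}-I_{n_2}=J_2\Lambda_2J_2^{-1}$, setting $V_1:=-J_1^{-1}P_{x^{(1)}}(J_1^{-1})^T{}^{-1}\!$... more cleanly, setting $V_i := -J_i^{-1} P_{x^{(i)}}{}^{-1} (J_i^{-1})^{T}{}^{-1}$ or its inverse so that $V_i\prec0$, checking that $V_i\Lambda_i=\Lambda_iV_i$ because $P_{x^{(i)}}$ and $Q_{ii}$ being simultaneously "aligned" with the eigenstructure forces commutation, and finally rewriting the single remaining matching equation relating $A_{12}$, $A_{21}$ and the shared matrix $Q_{12}$ (after eliminating $Q_{12}$) as the first line of~\eqref{equ:SDPfeasibility}.

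For the sufficiency direction I would run the construction in reverse: given (i)--(iii), set $\Lambda_1$ (which must be $\mathbf 0$ from the Class-$\mathcal S$ linearity requirement, so $A_{11}=I_{n_1}$), choose $P_{x^{(1)}} = (J_1^{-1})^T(-V_1)^{-1}J_1^{-1}$ and $P_{x^{(2)}}=(J_2^{-1})^T(-V_2)^{-1}J_2^{-1}$, which are positive definite by $V_i\prec0$; define $Q_{22}=-P_{x^{(2)}}^{-1}(A_{22}-I_{n_2})=-J_2(-V_2)\Lambda_2... $, i.e. $Q_{22}= (J_2^{-1})^T(-V_2)\Lambda_2 J_2^{-1}$-type expression, and check $Q_{22}\succeq0$ using $\Lambda_2\preceq0$, $V_2\prec0$ and the commutation $V_2\Lambda_2=\Lambda_2V_2$ (which makes $(-V_2)\Lambda_2$ symmetric negative... wait, $\succeq 0$ after the sign is handled); set $Q_{12}=P_{x^{(1)}}^{-1}A_{12}$; and verify that the coupling equation in~\eqref{equ:SDPfeasibility} is exactly what is needed for the $(2,1)$ block, namely $A_{21}=-P_{x^{(2)}}Q_{12}^T = -P_{x^{(2)}}A_{12}^TP_{x^{(1)}}^{-1}$, to be consistent — substituting the chosen $P_{x^{(i)}}$ shows this is precisely $(J_1^{-1})^TV_1J_1^{-1}A_{12}+A_{21}^T(J_2^{-1})^TV_2J_2^{-1}=\mathbf 0$. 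Finally one checks $f$ so constructed has a nonempty saddle-point set (from the nonemptiness of $\{(A-I_n)x+Cw=\mathbf 0\}$) and that $\mathbf H_{f(x^{(1)})}=Q_{11}=\mathbf 0$, $\mathbf H_{f(x^{(2)})}=Q_{22}\succeq0$, so~\eqref{equ:linearsystemsplit} is indeed the primal-dual gradient algorithm for $\max_{x^{(1)}}\min_{x^{(2)}} f$. The main obstacle I expect is the simultaneous-diagonalization / commutation bookkeeping: unlike the single-block Class-$\mathcal O$ case, here $P_{x^{(1)}}$ and $P_{x^{(2)}}$ are coupled only through the off-diagonal equation while each must individually be compatible (via $V_i\Lambda_i=\Lambda_iV_i$) with the eigenstructure of its own block, and getting the signs and transposes consistent between the $(1,2)$ and $(2,1)$ blocks — which is the whole content of~\eqref{equ:SDPfeasibility} — is the delicate step where an arbitrary choice of $P_{x^{(i)}}$ will generically fail and only a $V_i$-parametrized family works.
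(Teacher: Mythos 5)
The paper does not actually prove Theorem~\ref{theo:class-S}; it imports it from~\cite{zhang2018distributed}, so your proposal can only be measured against the intended argument. Your skeleton is the right one: match $A-I_n=\mathrm{diag}\{P_{x^{(1)}},-P_{x^{(2)}}\}Q$ blockwise to get $A_{11}-I_{n_1}=P_{x^{(1)}}Q_{11}$, $A_{12}=P_{x^{(1)}}Q_{12}$, $A_{21}=-P_{x^{(2)}}Q_{12}^T$, $A_{22}-I_{n_2}=-P_{x^{(2)}}Q_{22}$; use the lemma that a product of a positive definite matrix and a symmetric matrix is similar to a symmetric matrix (hence diagonalizable with real spectrum); and encode $P_{x^{(i)}}^{-1}$ through the $V_i$ so that eliminating $Q_{12}$ yields the cross-coupling identity $P_{x^{(1)}}^{-1}A_{12}+A_{21}^TP_{x^{(2)}}^{-1}=\mathbf{0}$. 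Those are the correct load-bearing ideas.

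There are, however, concrete gaps. First, your explicit choice $P_{x^{(i)}}=(J_i^{-1})^T(-V_i)^{-1}J_i^{-1}$ gives $P_{x^{(i)}}^{-1}=-J_iV_iJ_i^T$, and substituting this into the coupling identity produces $J_1V_1J_1^TA_{12}+A_{21}^TJ_2V_2J_2^T=\mathbf{0}$, which is \emph{not} the first line of~\eqref{equ:SDPfeasibility}. The correspondence that works is $P_{x^{(i)}}^{-1}=-(J_i^{-1})^TV_iJ_i^{-1}$, i.e., $P_{x^{(i)}}=-J_iV_i^{-1}J_i^T\succ0$; with that choice $Q_{22}=(J_2^{-1})^TV_2\Lambda_2J_2^{-1}$ is symmetric positive semi-definite precisely because $V_2\Lambda_2=\Lambda_2V_2$, $V_2\prec0$ and $\Lambda_2\preceq0$. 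Your hedging (``or its inverse'', ``type expression'') shows this was not pinned down, and it is exactly the step the conditions~\eqref{equ:SDPfeasibility} are designed to encode. Second, condition (i) does not ``follow from feasibility of the saddle-point iteration'': a discrete-time primal-dual iteration on a bilinear saddle function is generically unstable (eigenvalues $1\pm i\varepsilon\sigma$ have modulus greater than one), so stability is an independent requirement that must be assumed in the sufficiency direction and extracted from the convergence implicit in ``solves'' in the necessity direction; your argument says nothing usable about it either way. Third, you correctly observe that the paper's definition forces $Q_{11}=\mathbf{0}$, hence $\Lambda_1=\mathbf{0}$ and $A_{11}=I_{n_1}$, which is strictly stronger than conditions (ii)--(iii) as stated (they permit $\Lambda_1\prec0$); this is a genuine mismatch between the theorem and the definition used in this paper (the theorem is inherited from the $\mathbf{H}_{f(x^{(1)})}\preceq0$ version mentioned in the conclusion), and a complete proof must either restrict to $A_{11}=I_{n_1}$ or adopt the relaxed definition---you flag the tension but never commit to which statement you are proving.
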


Note that there could be multiple optimization problems corresponding to the same dynamical system~\eqref{eq:LTI-sys} in either Class-$\mathcal{O}$ or Class-$\mathcal{S}$. The derivation procedure of those problems can be found in~\cite{zhang2018distributed}.

\subsection{Motivating Applications} \label{se:application}
Many existing systems fall into the two classes. For example, consensus in multi-agent systems, primal Internet congestion control protocols and heating, ventilation and air-conditioning (HVAC) system control are in Class-$\mathcal{O}$ (see~\cite{myICCPS} for a detailed description) while primal-dual Internet congestion control protocols, distributed PI control for single integrators and frequency control in power systems are in Class-$\mathcal{S}$. Here we present one typical example for each class.

\subsubsection{Internet Congestion Control} \label{se:ICC}
Internet congestion control regulates the data transfer and efficient bandwidth sharing between sources and links. 
Here we consider a standard primal congestion control algorithm~\cite{Srikant2004The}, which belongs to Class-$\mathcal{O}$:
\begin{subequations}\label{eq:ICC_O}
\begin{align}
&x_i(k+1)=x_i(k)+\varepsilon k_{x_i}(U_i^{\prime}(x_i(k))-q_i(k)) \\
&q_i(k)=\sum_{l=1}^M R_{li}p_l(k) \\
&p_l(k)=f_l(y_l(k)) \\
&y_l(k)=\sum_{i=1}^NR_{li}x_i(k)
\end{align}
\end{subequations}where $\varepsilon$ is the step size, $U_i(x_i)$ is the utility function of user $i$, which is assumed to be a continuously differentiable, monotonically increasing, strictly concave function of the transmission rate $x_i$. $R$ is a routing matrix describing the interconnection where $R_{li} = 1$ if user $i$ uses link $l$; otherwise $R_{li} = 0$.
Also, $k_{x_i}>0$ is the rate gain, $p_l>0$ is the link price, $q_i$ is the aggregate price along user $i$'s path, $y_l$ is the total flow through link $l$, $f_l(y_l)$ is a barrier/penalty function satisfying $f_l(y_l),f_l^\prime(y_l)>0$ (usually, $f_l(y_l)$ forces the satisfaction of the inequality constraint $y_l \le c_l$ where $c_l>0$ is the link capacity of link $l$), $N$ and $M$ are the numbers of sources and links. 
The above dynamics can be rearranged in a vector form as
\begin{align}
x(k+1)=x(k) + \varepsilon \text{diag}\{k_{x_i}\} \left( U^{\prime} (x(k) ) - R^Tf(Rx(k)) \right)   \label{eq:network_matrix}
\end{align}
where $x(k), U^{\prime} (x(k) )\in \mathbb{R}^N, f(Rx(k))\in \mathbb{R}^M$ are the corresponding vector forms of $x_i(k)$, $U_i^{\prime} (x_i(k) )$, $f_l (\sum\limits_{i = 1}^N {R_{li} x_i(k) } )$.

Suppose $U_i(x_i)$ is quadratic and $f_l(y_l)$ is linear, i.e., $U(x)=-\frac{1}{2}Q \text{diag}\{x_i\}x+R_1x+s_1$ and $f(Rx)=R_2Rx+s_2$ where $Q,R_1,R_2$ are diagonal, positive definite constant matrices, and $s_1,s_2$ are constant vectors. Then~\eqref{eq:network_matrix} can be rearranged as 
\begin{align} 
 &x(k+1)=x(k) - \varepsilon \text{diag}\{k_{x_i}\}(Qx(k)+R^TR_2Rx(k))+\cdots \nonumber\\
 &=\big(I_n-\varepsilon \text{diag}\{k_{x_i}\}(Q+R^TR_2R)\big)x(k)+\cdots \label{eq:ICC_matrix}
\end{align}
where $\cdots$ denotes the remaining constant terms. Compared with~\eqref{eq:LTI-sys}, it is straightforward to notice that $I_n-A$ in~\eqref{eq:ICC_matrix} is $\varepsilon \text{diag}\{k_{x_i}\}(Q+R^TR_2R)$. 
It satisfies the conditions listed in Theorem~\ref{theo:class-O} and thus, the above dynamics~\eqref{eq:ICC_O} or \eqref{eq:network_matrix} can be reverse-engineered as a gradient descent algorithm to solve
\begin{align} \label{eq:opt_net}
\min_{x_i\ge0}-\sum_{i=1}^NU_i(x_i)+\sum_{l=1}^M\int_0^{\sum_{i=1}^NR_{li}x_i}f_l(\beta)d\beta.
\end{align}
\indent In general, the reverse-engineering from~\eqref{eq:network_matrix} to the above optimization problem always works if $U_i(x_i)$ is concave. 
It is of interest to redesign the primal congestion control algorithm~\eqref{eq:network_matrix} for a faster convergence speed to improve data transfer.

\subsubsection{Distributed PI Control for Single Integrator Dynamics} \label{se:PI}
In the following, we propose distributed PI control for single integrators as a motivating example that belongs to Class-$\mathcal{S}$. Consider $n$ agents with single integrator dynamics
\begin{align*}
 \dot y_i  &= d_i  + u_i 
\end{align*}
where $d_i$ is a constant disturbance and $u_i$ is the control input given by
\begin{align} \label{eq:integrator_u}
u_i  =  - \sum\limits_{j \in {\rm \mathcal{N}}_i } {\left( {\rho_2(y_i  - y_j ) + \rho_1 \int\limits_0^t {\big( {y_i \left( \tau  \right) - y_j \left( \tau  \right)} \big)d\tau } } \right)} - \delta \left( {y_i  - y_i(0)} \right)     
\end{align}
where $\rho_1, \rho_2, \delta$ are positive constant parameters, $y_i(0)$ is the initial condition and $\mathcal{N}_i$ is the set of neighbors of agent $i$. This controller drives agents to reach consensus under static disturbances and any initial condition according to Theorem 6 in~\cite{andreasson2014distributed}. Introduce the integral action $\dot z_i = y_i - y_n$ and rearrange the dynamics after discretizing in a vector form as 
\begin{align*} 
\underbrace {\left[ {\begin{array}{*{20}c}
   {z(k + 1)}  \\
   {y(k + 1)}  \\
\end{array}} \right]}_{x(k + 1)} =& \underbrace {\left[ {\begin{array}{*{20}c}
   {I_{n-1} } & {\varepsilon_2 D}  \\
   { - \varepsilon_1 \rho_1 \tilde{L}} & {I_n  - \varepsilon_1 \rho_2 L - \varepsilon_1 \delta I_n }  \\
\end{array}} \right]}_A\underbrace {\left[ {\begin{array}{*{20}c}
   {z(k)}  \\
   {y(k)}  \\
\end{array}} \right]}_{x(k)} \\  \nonumber
&+ \underbrace {\left[ {\begin{array}{*{20}c}
   \textbf{0}  \\
   {\varepsilon_1 \big(d+\delta y(0)\big)}  \\
\end{array}} \right]}_{Cw}
\end{align*}
where $z(k) \in \mathbb{R}^{n-1}$, $y(k), d \in \mathbb{R}^{n}$, $x(k) \in \mathbb{R}^{2n-1}$, $\varepsilon_1$ and $\varepsilon_2$ are the step sizes and $D=[I_{n-1} \quad -\textbf{1}]$ ($\textbf{1}$ is a column vector of ones). $L \in \mathbb{R}^{n \times n}$ is the Laplacian of the connected agent network and $\tilde{L} \in \mathbb{R}^{n \times (n-1)}$ is obtained after removing the $n$th column of $L$. Compared with~\eqref{equ:linearsystemsplit}, it is straightforward to notice that $A_{11}-I_{n_1}$ is $\textbf{0}$ and $A_{22}-I_{n_2}$ is $- \varepsilon_1 \rho_2 L - \varepsilon_1 \delta I_n$. They satisfy the conditions listed in Theorem~\ref{theo:class-S} and therefore, the above dynamics can be reverse-engineered as a primal-dual gradient algorithm to solve 
\begin{align} \label{eq:PI_f}
\begin{array}{l}
\mathop {\max }\limits_{z \in \mathbb{R}^{n-1} } \mathop {\min }\limits_{y \in \mathbb{R}^n } f 
 = \frac{\rho_2}{2}y^T Ly + \rho_1 z^T \tilde{L}^T y + \frac{\delta }{2}y^T y - y^T d - y^T\delta y(0).
 \end{array}
\end{align}
It is of interest to redesign the distributed PI controller~\eqref{eq:integrator_u} to improve system performance.

\subsection{Problem Setup}
Motivated by the above examples, the desiderata is: for dynamical systems belonging to Class-$\mathcal{O}$ and Class-$\mathcal{S}$, improve their performance (convergence rates/speeds and transient behaviors) through redesigning the existing protocols and controls while maintaining their original control structures, i.e., the amount and topology of input channels remain unchanged.

\section{Redesign Methodology} \label{se:redesign}
In this section, we propose the redesign methodology for linear dynamical systems in Class-$\mathcal{O}$ and Class-$\mathcal{S}$. In particular, we analyze the convergence rates of the original systems and the redesigned systems. 
The corresponding explicit forms of the extra dynamics in the redesigned systems are derived. For convenience, we only consider discrete-time cases here, and continuous-time cases will be discussed in a future paper.

\begin{defi} \label{defi:f(x)}
\textbf{Function class $\mathscr{F}_L$}: $f \in \mathscr{F}_L$ means that the function $f$: $\mathbb{R}^{n}\rightarrow\mathbb{R}$ is convex and has $L$-Lipschitz continuous gradient ($L \geq 0$), i.e., $\forall x,y \in \mathbb{R}^n$, we have 
\begin{align}
\left\| {\nabla f(x) - \nabla f(y)} \right\| \le L \left\| {x - y} \right\|. \label{eq:lipschitz}
\end{align}

\noindent\textbf{Function class $\mathscr{S}_{\mu,L}$}: $f \in \mathscr{S}_{\mu,L}$ means that $f$ is $\mu$-strongly convex ($\mu >0$) and has $L$-Lipschitz continuous gradient ($L \geq 0$), i.e., $\forall x,y \in \mathbb{R}^n$, we have \eqref{eq:lipschitz} and 
\begin{align}
f(y) \ge f(x) + \nabla f(x)^T (y - x) + \frac{\mu }{2}\left\| {y - x} \right\|^2. 
\end{align}
\end{defi}

\begin{rem}
If the function $f \in \mathscr{S}_{\mu,L}$, there exist constants $L$ and $\mu$ such that $\forall x,y \in \mathbb{R}^n$,
\begin{align*} 
\mu \le\frac{\| \nabla f(x) - \nabla f(y)\|}{\|x-y\|} \le L 
\end{align*}

Furthermore, if the function $f$ is twice-differentiable, then its second derivative satisfies 
\begin{align*}
\mu I_n \preceq \nabla^2 f(x) \preceq LI_n.
\end{align*}
\indent Therefore, through the above inequalities, parameters $L$ and $\mu$ can be obtained.
\end{rem}

\subsection{Systems in Class-$\mathcal{O}$}
\subsubsection{Convergence Rate of the Original Systems}
Any system~\eqref{eq:LTI-sys} in Class-$\mathcal{O}$ can be reverse-engineered as a gradient descent (GD) algorithm to solve an unconstrained convex optimization problem, so the convergence rate of system~\eqref{eq:LTI-sys} in Class-$\mathcal{O}$ follows that of a GD algorithm. Suppose the reverse-engineered optimization problem is
\begin{align}
 \min_{x \in \mathbb{R}^n } f\left( x \right) \label{eq:uncons}
\end{align}
where $x \in \mathbb{R}^n$ is the decision vector and $f$ is a convex, scalar differentiable function of $x$. Assume that a global minimizer of problem~\eqref{eq:uncons} exists and is denoted by $x^*$. 

The GD algorithm is the simplest algorithm to solve problem~\eqref{eq:uncons} in discrete-time~\cite{nesterov2018lectures}, as shown in Algorithm 1, where $k=0, 1, \dots, N$ is the time step. Theorems~\ref{theo:GD} and \ref{theo:GD_stro} summarize the convergence property of the GD algorithm/original system when the objective function $f\in \mathscr{F}_L$ or $f\in \mathscr{S}_{\mu,L}$. 
\begin{algorithm}
\caption{Gradient descent algorithm}
\label{alg:GD}
\textbf{Setting:} Choose appropriate positive step size $\varepsilon$, and $x_0 \in \mathbb{R}^n$.    
\begin{align} \label{eq:GD}
{x_{k + 1}} = {x_k}-\varepsilon\nabla f\left( {{x_k}} \right)
\end{align}
\end{algorithm}

\begin{theo}  \label{theo:GD}
For any system~\eqref{eq:LTI-sys} in Class-$\mathcal{O}$, if the objective function $f$ obtained via reverse-engineering belongs to $\mathscr{F}_L$ and $0<\varepsilon<2/L$, then the convergence rate of this system is given by (See Theorem 2.1.14 in \cite{nesterov2018lectures} for the proof) 
\begin{align}
f\left( {{x_k}} \right) - {f(x^*)} \le \frac{{2{{\left\| {{x_0} - {x^*}} \right\|}^2}}}{{\varepsilon k}} = O\left( {\frac{1}{{\varepsilon k}}} \right), \quad \forall k \in \mathbb{Z}^+. \nonumber
\end{align}
\end{theo}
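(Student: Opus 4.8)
The plan is to reduce the statement to the textbook $O(1/k)$ convergence bound for gradient descent on an $L$-smooth convex function and then reproduce the standard potential-function argument; this is exactly Theorem~2.1.14 of~\cite{nesterov2018lectures}, so one could also simply cite it. The reduction is immediate from the definition of Class-$\mathcal{O}$: the trajectory of~\eqref{eq:LTI-sys} \emph{is} the sequence generated by Algorithm~\ref{alg:GD} run on the reverse-engineered objective $f$ (with $P=\varepsilon I_n$). For the linear case $f$ is the convex quadratic $\frac12 x^T Q x + x^T R(Cw) + s(w)$ with Hessian $Q \succeq 0$, so $f \in \mathscr{F}_L$ with $L = \lambda_{\max}(Q)$ and the hypothesis $f \in \mathscr{F}_L$ is never vacuous; a minimizer $x^*$ exists because the Class-$\mathcal{O}$ definition requires $\{\nabla f(x) = \mathbf{0}\}$ to be nonempty. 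It therefore suffices to bound $f(x_k) - f(x^*)$ for the iteration $x_{k+1} = x_k - \varepsilon\nabla f(x_k)$.

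I would then carry out four steps. (1) \emph{Descent lemma.} $L$-Lipschitz continuity of $\nabla f$ gives $f(y) \le f(x) + \nabla f(x)^T(y-x) + \tfrac{L}{2}\|y-x\|^2$; putting $y = x_{k+1}$, $x = x_k$ yields $f(x_{k+1}) \le f(x_k) - \varepsilon\big(1 - \tfrac{L\varepsilon}{2}\big)\|\nabla f(x_k)\|^2$. Because $0 < \varepsilon < 2/L$, the factor $\varepsilon(1 - L\varepsilon/2)$ is strictly positive, so $\{f(x_k)\}$ is non-increasing and $\|\nabla f(x_k)\|^2$ is dominated by the per-step decrease $f(x_k) - f(x_{k+1})$. (2) \emph{Potential inequality.} Expand $\|x_{k+1} - x^*\|^2 = \|x_k - x^*\|^2 - 2\varepsilon\nabla f(x_k)^T(x_k - x^*) + \varepsilon^2\|\nabla f(x_k)\|^2$, bound $\nabla f(x_k)^T(x_k - x^*) \ge f(x_k) - f(x^*)$ by convexity, and use the descent lemma to absorb the $\varepsilon^2\|\nabla f(x_k)\|^2$ term, obtaining $\|x_{k+1} - x^*\|^2 \le \|x_k - x^*\|^2 - 2\varepsilon\big(f(x_{k+1}) - f(x^*)\big)$. (3) \emph{Telescoping.} Summing over $k = 0, \dots, N-1$ gives $2\varepsilon\sum_{k=1}^{N}\big(f(x_k) - f(x^*)\big) \le \|x_0 - x^*\|^2$. (4) \emph{Averaging.} Monotonicity of $\{f(x_k)\}$ gives $N\big(f(x_N) - f(x^*)\big) \le \sum_{k=1}^{N}\big(f(x_k) - f(x^*)\big)$, hence $f(x_N) - f(x^*) \le \|x_0 - x^*\|^2/(2\varepsilon N) = O(1/(\varepsilon N))$, which yields the claimed estimate.

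The main obstacle is the bookkeeping in step (2): cancelling the term $\varepsilon^2\|\nabla f(x_k)\|^2$ against the function decrease so that the distance-to-optimum recursion telescopes cleanly. The clean cancellation above works directly once $\varepsilon \le 1/L$ (so that $1 - L\varepsilon/2 \ge 1/2$); the constant $2$ in the numerator of the stated bound is comfortably loose in that regime (one actually gets $1/(2\varepsilon N)$), while for the remaining range $1/L < \varepsilon < 2/L$ one keeps track of the factor $1 - L\varepsilon/2$ explicitly, exactly as in Nesterov's refinement. Everything else — the descent lemma, the convexity inequality, the telescoping sum and the averaging step — is routine. Since the Class-$\mathcal{O}$ reduction turns the system into precisely the iteration analysed in~\cite{nesterov2018lectures}, invoking Theorem~2.1.14 there gives the result with no further work.
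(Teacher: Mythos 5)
Your proposal is correct and matches the paper's treatment: the paper gives no proof of its own for this theorem and simply invokes Theorem~2.1.14 of~\cite{nesterov2018lectures} after the same reduction (a Class-$\mathcal{O}$ system \emph{is} the gradient descent iteration on the reverse-engineered $f$), which is exactly your first paragraph. Your sketch of the descent-lemma/telescoping argument is the standard proof of that cited theorem, and your remark that the paper's constant $2\|x_0-x^*\|^2/(\varepsilon k)$ is looser than the $\|x_0-x^*\|^2/(2\varepsilon k)$ obtainable for $\varepsilon\le 1/L$ is accurate.
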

 
\begin{theo}  \label{theo:GD_stro}
For any system~\eqref{eq:LTI-sys} in Class-$\mathcal{O}$, if the objective function $f$ obtained via reverse-engineering belongs to $\mathscr{S}_{\mu,L}$ and $0<\varepsilon \le \frac{2}{{\mu  + L}}$, then the convergence rate of this system is given by
\begin{align*}
\left\| {x_{k}  -  x^* } \right\| \le  (1 - \mu \varepsilon )^k \left\| {x_0  - x^* } \right\|=O({e^{ - k \mu\varepsilon}}), \quad \forall k \in \mathbb{Z}^+.
\end{align*}
\end{theo}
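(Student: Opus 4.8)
The plan is to exploit the fact that a linear system in Class-$\mathcal{O}$ forces the reverse-engineered objective to be a convex quadratic, so that the gradient-descent iteration \eqref{eq:GD} becomes an affine one-step map whose contraction factor can be read directly off the spectrum of its matrix; a black-box convergence theorem will not be sharp enough.

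First I would translate the hypotheses. Writing $f(x)=\tfrac12 x^T Q x + x^T R(Cw) + s(w)$ with $Q=\nabla^2 f \succeq 0$ as in the discussion preceding Theorem~\ref{theo:class-O}, the assumption $f\in\mathscr{S}_{\mu,L}$ is exactly $\mu I_n \preceq Q \preceq L I_n$ (so in particular $\mu\le L$). Since $x^*$ is a global minimizer, $\nabla f(x^*)=Qx^*+R(Cw)=\mathbf{0}$, hence $\nabla f(x)=Q(x-x^*)$ for every $x$. Substituting into \eqref{eq:GD} gives $x_{k+1}-x^* = (I_n-\varepsilon Q)(x_k-x^*)$, and iterating, $x_k-x^* = (I_n-\varepsilon Q)^k (x_0-x^*)$.

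Next I would estimate $\|(I_n-\varepsilon Q)^k\|$. As $I_n-\varepsilon Q$ is symmetric, $\|(I_n-\varepsilon Q)^k\| = \|I_n-\varepsilon Q\|^k$, and this spectral norm equals $\max\{|1-\varepsilon\lambda|: \lambda \text{ an eigenvalue of }Q\}$. Every such $\lambda$ lies in $[\mu,L]$ and $\lambda\mapsto|1-\varepsilon\lambda|$ is convex, so the maximum is attained at an endpoint, i.e. it is $\max\{|1-\varepsilon\mu|,\,|1-\varepsilon L|\}$. A short case split using $0<\varepsilon\le\tfrac{2}{\mu+L}$ shows $\varepsilon\mu\le\tfrac{2\mu}{\mu+L}\le1$ so $|1-\varepsilon\mu|=1-\varepsilon\mu$, while $|1-\varepsilon L|\le 1-\varepsilon\mu$ both when $\varepsilon L\le1$ (because $L\ge\mu$) and when $\varepsilon L>1$ (because then $\varepsilon L-1\le1-\varepsilon\mu\iff\varepsilon(\mu+L)\le2$). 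Hence $\|I_n-\varepsilon Q\|\le1-\varepsilon\mu$, which yields $\|x_k-x^*\|\le(1-\varepsilon\mu)^k\|x_0-x^*\|$; and since $1-t\le e^{-t}$ for all $t$, $(1-\varepsilon\mu)^k\le e^{-k\varepsilon\mu}$, i.e. $O(e^{-k\varepsilon\mu})$.

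The step deserving the most care, and which I regard as the main obstacle, is this spectral-norm estimate under the restriction $\varepsilon\le\tfrac{2}{\mu+L}$: the generic gradient-descent bound for $\mathscr{S}_{\mu,L}$ objectives obtained from co-coercivity (e.g. \cite[Thm.~2.1.15]{nesterov2018lectures}) only delivers a per-step factor $\sqrt{1-2\varepsilon\mu L/(\mu+L)}$, which is strictly larger than $1-\varepsilon\mu$ whenever $\varepsilon<\tfrac{2}{\mu+L}$. It is therefore essential to use that a Class-$\mathcal{O}$ objective has a constant Hessian, reducing the dynamics to the exact linear recursion above, rather than to invoke a generic convergence theorem; once that is in place the remaining manipulations are routine linear algebra.
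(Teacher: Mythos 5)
Your proof is correct, but it takes a different route from the paper's. You exploit the fact that a Class-$\mathcal{O}$ objective is quadratic with constant Hessian $Q$, reduce gradient descent to the exact linear recursion $x_{k+1}-x^*=(I_n-\varepsilon Q)(x_k-x^*)$, and bound the spectral norm by $\max\{|1-\varepsilon\mu|,|1-\varepsilon L|\}\le 1-\varepsilon\mu$; the case analysis under $\varepsilon\le\tfrac{2}{\mu+L}$ is done correctly. The paper instead starts from the co-coercivity estimate $\|x_{k+1}-x^*\|^2\le\bigl(1-\tfrac{2\varepsilon\mu L}{L+\mu}\bigr)\|x_k-x^*\|^2+\varepsilon\bigl(\varepsilon-\tfrac{2}{L+\mu}\bigr)\|\nabla f(x_k)\|^2$ and, rather than discarding the gradient term, lower-bounds it via $\|\nabla f(x_k)\|\ge\mu\|x_k-x^*\|$ (valid since $\nabla f(x^*)=0$); because its coefficient is nonpositive, this substitution preserves the inequality, and the resulting factor collapses algebraically to $(1-\mu\varepsilon)^2$. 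Consequently, your closing assessment is mistaken on one point: it is \emph{not} essential to use the constant-Hessian structure to reach the factor $1-\mu\varepsilon$ — the generic argument only yields the weaker factor $\sqrt{1-2\varepsilon\mu L/(\mu+L)}$ if one drops the gradient term, which the paper does not do. The trade-off is that your spectral argument is sharper in spirit but limited to quadratics (sufficient for the LTI systems in the theorem statement), whereas the paper's proof holds for arbitrary $f\in\mathscr{S}_{\mu,L}$, which is convenient since the same theorem is reinvoked in the appendix for auxiliary gradient steps on $g(\lambda)$ and $\tilde f(x)$.
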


\begin{proof}
Theorem 2.1.15 in~\cite{nesterov2018lectures} and Theorem 3.12 in~\cite{Bubeck2014convex} showed that $\left\| {x_{k + 1}  -  x^* } \right\|^2$$\le$ $\left(1 - \frac{{2\varepsilon \mu L}}{{L + \mu}}\right)$\\$\times\left\| {x_k  -  x^* } \right\|^2$$+ \varepsilon \left (\varepsilon  - \frac{2}{{L + \mu }}\right)\left\| {\nabla f(x_k )} \right\|^2$. But Theorem~\ref{theo:GD_stro} actually provides a more strict upper bound compared with Theorem 2.1.15 in~\cite{nesterov2018lectures}. The step size should be in the range of $0<\varepsilon \le \frac{2}{{\mu  + L}}$, so that the sequence $\|x_k - x^*\|_{k\ge 0}$ is decreasing. Applying inequality $\left\|\nabla f(y) - \nabla f(x) \right\| \ge \mu \left\|y-x \right \|$ and optimality condition $\nabla f( x^* ) = 0$, we have 
\begin{align*}
{\left\| {{x_{k + 1}} - {{ x}^*}} \right\|^2} \le& \left( {1 - \frac{{2\varepsilon \mu L}}{{L + \mu}} + \varepsilon {\mu^2}\left( {\varepsilon  - \frac{2}{{L + \mu }}} \right)} \right){\left\| {{x_k} - {{ x}^*}} \right\|^2}
=\left (1- \mu \varepsilon \right)^2{\left\| {{x_k} - {{ x}^*}} \right\|^2}.
\end{align*}
The value of $\left(1 - \mu \varepsilon \right)$ is non-negative when $0<\varepsilon \le \frac{2}{{\mu  + L}}$ and $\mu \le L$. Removing squares on both sides of the inequality completes the proof. Note that the last equation in this theorem is obtained via the inequality $(1-t)\le e^{-t}, \forall t$.
\end{proof}

\begin{coro} \label{coro:GD}
For any system~\eqref{eq:LTI-sys} in Class-$\mathcal{O}$, if the objective function $f$ obtained via reverse-engineering belongs to $\mathscr{S}_{\mu,L}$ and the step size $\varepsilon=\frac{2}{{\mu  + L}}$, then this system is with the optimal convergence rate given by~\cite{nesterov2018lectures}
\begin{align}
\left\| {x_{k}  -  x^* } \right\| \le \left(\frac{{L  - \mu}}{{L  + \mu}}\right)^k \left\| {x_0  -  x^* } \right\| =  O({e^{ - k \frac{2\mu}{L+\mu}}}), \forall k \in \mathbb{Z}^+.   \nonumber
\end{align}
\end{coro}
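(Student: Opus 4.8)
The plan is to derive Corollary~\ref{coro:GD} as the special case $\varepsilon = \frac{2}{\mu+L}$ of Theorem~\ref{theo:GD_stro}, since that choice of step size is admissible (it is the right endpoint of the interval $0 < \varepsilon \le \frac{2}{\mu+L}$) and, as the name "optimal convergence rate" suggests, it is the step size that minimizes the contraction factor $1-\mu\varepsilon$ over the admissible range. First I would simply substitute $\varepsilon = \frac{2}{\mu+L}$ into the bound $\|x_k - x^*\| \le (1-\mu\varepsilon)^k \|x_0 - x^*\|$ from Theorem~\ref{theo:GD_stro}, obtaining $1 - \mu\varepsilon = 1 - \frac{2\mu}{\mu+L} = \frac{L-\mu}{L+\mu}$, which yields the stated geometric bound immediately.

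Second, to justify that this is indeed the \emph{optimal} rate among admissible step sizes, I would observe that on $\left(0, \frac{2}{\mu+L}\right]$ the map $\varepsilon \mapsto 1 - \mu\varepsilon$ is strictly decreasing, so it is minimized at the right endpoint $\varepsilon = \frac{2}{\mu+L}$; hence no other admissible step size in Theorem~\ref{theo:GD_stro} gives a smaller contraction factor. (If one wants the stronger claim that $\frac{L-\mu}{L+\mu}$ is optimal over \emph{all} first-order methods, not just this family of step sizes, I would instead cite the lower bound in~\cite{nesterov2018lectures}, which is already referenced in the statement; I would not attempt to reprove it here.) Third, I would convert the geometric bound to the exponential form by applying $1 - t \le e^{-t}$ with $t = \frac{2\mu}{L+\mu}$, exactly as in the last line of the proof of Theorem~\ref{theo:GD_stro}, giving $\left(\frac{L-\mu}{L+\mu}\right)^k \le e^{-k\frac{2\mu}{L+\mu}} = O\!\left(e^{-k\frac{2\mu}{L+\mu}}\right)$.

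The argument is essentially a one-line specialization followed by a trivial monotonicity remark, so I do not anticipate any real obstacle. The only point requiring a moment's care is checking that $\varepsilon = \frac{2}{\mu+L}$ keeps $1 - \mu\varepsilon$ non-negative — needed so that removing squares in Theorem~\ref{theo:GD_stro} is valid — which holds because $\mu \le L$ implies $\frac{L-\mu}{L+\mu} \ge 0$; this was already noted in the proof of Theorem~\ref{theo:GD_stro} and carries over verbatim. A secondary subtlety, if I wanted to be thorough, is distinguishing "optimal within this step-size family" from "information-theoretically optimal for first-order methods on $\mathscr{S}_{\mu,L}$"; I would state explicitly which notion is meant and defer the latter to the cited reference rather than belabor it.
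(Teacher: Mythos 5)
Your proposal is correct and matches the paper's (implicit) derivation: the corollary is obtained exactly by setting $\varepsilon=\frac{2}{\mu+L}$ in the bound of Theorem~\ref{theo:GD_stro}, giving $1-\mu\varepsilon=\frac{L-\mu}{L+\mu}$, with the exponential form following from $1-t\le e^{-t}$; the optimality over the admissible step-size range and the reference to~\cite{nesterov2018lectures} for the stronger notion are handled just as you describe.
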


\subsubsection{Redesign via a Heavy Ball Method}
Once reverse-engineering a given system~\eqref{eq:LTI-sys} in Class-$\mathcal{O}$ as a GD algorithm to solve~\eqref{eq:uncons}, it is natural to apply faster algorithms to solve this problem, which can result in a redesigned system formula with improved performance.


In Algorithm~\ref{alg:GD}, the next point only depends on the current point like in a Markov chain. The heavy ball (HB) method, introduced by Polyak~\cite{polyak1987introduction}, utilizes a momentum term $x_k-x_{k-1}$ to incorporate the effect of second-order change, often leading to smoother trajectories and a faster convergence rate~\cite{polyak1987introduction}. Algorithm~\ref{alg:HB} shows its iterations.

\begin{algorithm}
\caption{Heavy ball method}
\label{alg:HB}
\textbf{Setting:} Choose appropriate positive step size $\varepsilon$, coefficient $\beta$ and let $x_1 = x_0$ be an arbitrary initial condition.    
\begin{align}
{x_{k + 1}} = {x_k} - \varepsilon \nabla f({x_k}) + \beta ({x_k} - {x_{k - 1}}) \label{eq:HB}
\end{align}
\end{algorithm}

For any system~\eqref{eq:LTI-sys} in Class-$\mathcal{O}$, to improve system performance, apply the HB method to solve problem~\eqref{eq:uncons} to obtain
\begin{align}\label{eq:HB_uk}
    {x_{k + 1}} = {x_k} - \varepsilon \nabla f({x_k}) + \underbrace {\beta ({x_k} - {x_{k - 1}})}_{\Delta {u_k}} 
\end{align}
where $\Delta {u_k} = \beta ({x_k} - {x_{k - 1}})$ is the explicit form of extra dynamics. It is clear that~\eqref{eq:HB_uk} is the sum of the GD formula~\eqref{eq:GD} and extra dynamics $\Delta u_k$. 
Therefore, the redesigned system via the HB method is 
\begin{align}\label{eq:HB_re}
x_{k+1}=Ax_k+Cw+\Delta u_k
\end{align}
where $\Delta {u_k} = \beta ({x_k} - {x_{k - 1}})$. Theorem~\ref{theo:HB} summarizes the convergence property when applying the HB method to redesign.

\begin{theo} \label{theo:HB}
For any system~\eqref{eq:LTI-sys} in Class-$\mathcal{O}$, if the objective function $f$ obtained via reverse-engineering belongs to $\mathscr{S}_{\mu,L}$ and is twice-differentiable, the step size $\varepsilon=\frac{4}{{{{\left( {\sqrt L  + \sqrt \mu  } \right)}^2}}}$ and $\beta = \frac{\sqrt{L}-\sqrt{\mu}}{\sqrt{L}+\sqrt{\mu}}$, then the redesigned system via the HB method~(\ref{eq:HB_re}) is with the optimal convergence rate given by (See Section 3.2.1 in~\cite{polyak1987introduction} for the proof): 
\begin{align}
\left\| {x_k  -  x^* } \right\| \le \left(\frac{{\sqrt L- \sqrt{\mu}}}{{\sqrt L + \sqrt{\mu}}}\right)^k \left\| {x_0 -  x^* } \right\|=O({e^{-k\frac{2\sqrt{\mu}}{\sqrt L + \sqrt{\mu}}}}), \quad \forall k \in \mathbb{Z}^+.  \nonumber
\end{align}
\end{theo}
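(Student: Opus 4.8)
The plan is to exploit the linearity of the system to reduce the claim to Polyak's classical spectral analysis of the heavy ball method on a strongly convex quadratic. \textbf{First}, I would use the hypotheses to pin down $f$: because~\eqref{eq:LTI-sys} is linear and lies in Class-$\mathcal{O}$, the reverse-engineered $f$ is a convex quadratic, and because $f\in\mathscr{S}_{\mu,L}$ is twice-differentiable, the Remark after Definition~\ref{defi:f(x)} supplies a constant Hessian $Q:=\nabla^2 f$ with $\mu I_n\preceq Q\preceq L I_n$ and $\nabla f(x)=Q(x-x^*)$, where $x^*$ is the unique minimizer (it exists by the nonemptiness assumption and is unique since $\mu>0$). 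Substituting this into the redesigned dynamics~\eqref{eq:HB_re}, which by construction coincides with the heavy ball iteration~\eqref{eq:HB} applied to $f$, and writing $e_k:=x_k-x^*$, yields the linear second-order recursion
\begin{align*}
e_{k+1}=\big((1+\beta)I_n-\varepsilon Q\big)e_k-\beta e_{k-1}.
\end{align*}

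\textbf{Second}, I would lift this to a first-order system on the augmented state $\xi_k:=(e_k^T,e_{k-1}^T)^T$, so that $\xi_{k+1}=T\xi_k$, where $T$ is the $2n\times 2n$ matrix with block rows $\big[(1+\beta)I_n-\varepsilon Q \ \ {-\beta I_n}\big]$ and $\big[I_n \ \ \mathbf{0}\big]$. Diagonalizing $Q=U\Lambda U^T$ with $\Lambda=\mathrm{diag}(\lambda_1,\dots,\lambda_n)$ and $\lambda_i\in[\mu,L]$, and regrouping coordinates, decouples $T$ into $n$ independent $2\times2$ blocks $T_{\lambda}$, one per eigenvalue, each with characteristic polynomial $t^2-(1+\beta-\varepsilon\lambda)t+\beta$. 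Since the product of its two roots is $\beta$, whenever the discriminant is nonpositive both roots are complex conjugates of modulus $\sqrt\beta$, and the discriminant condition is exactly $(1-\sqrt\beta)^2\le\varepsilon\lambda\le(1+\sqrt\beta)^2$. The one genuine computation is then to verify that the prescribed step size $\varepsilon=4/(\sqrt L+\sqrt\mu)^2$, together with the momentum chosen so that $\sqrt\beta=\rho:=(\sqrt L-\sqrt\mu)/(\sqrt L+\sqrt\mu)$, makes $(1-\sqrt\beta)^2=\varepsilon\mu$ and $(1+\sqrt\beta)^2=\varepsilon L$, so the condition holds for every $\lambda\in[\mu,L]$ and is tight at the two endpoints. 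Hence $\rho(T)=\sqrt\beta=\rho$, which is the asserted contraction factor.

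\textbf{The main obstacle} is the final step: passing from ``$\rho(T)=\rho$'' to the constant-free bound $\|x_k-x^*\|\le\rho^k\|x_0-x^*\|$. The matrix $T$ is not normal, and precisely at the extreme eigenvalues $\lambda=\mu$ and $\lambda=L$ the discriminant vanishes, so $T_\mu$ and $T_L$ each carry a repeated eigenvalue of modulus $\rho$ inside a genuine $2\times2$ Jordan block, contributing a factor linear in $k$ to $\|T^k\|$. The clean geometric estimate in the theorem is therefore really the asymptotic rate (equivalently, it holds after an arbitrarily small decrease of $\varepsilon$ that pushes the whole spectrum strictly inside the disc of radius $\rho$ and diagonalizes every block). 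Having reduced the problem to exactly the quadratic heavy ball setting, I would invoke the refined estimate of Polyak (Section~3.2.1 of~\cite{polyak1987introduction}) for this bookkeeping, as the theorem statement already indicates; Steps one and two above are otherwise routine linear algebra once the quadratic structure is exposed.
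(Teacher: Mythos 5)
The paper gives no proof of its own for this theorem---it defers entirely to Section 3.2.1 of Polyak's book---and your proposal reconstructs precisely the argument that citation points to: reduce to the quadratic error recursion $e_{k+1}=((1+\beta)I_n-\varepsilon Q)e_k-\beta e_{k-1}$, decouple into $2\times 2$ companion blocks with characteristic polynomial $t^2-(1+\beta-\varepsilon\lambda)t+\beta$, and verify that the prescribed $\varepsilon$ and $\beta$ make $(1-\sqrt\beta)^2=\varepsilon\mu$ and $(1+\sqrt\beta)^2=\varepsilon L$, so every block has spectral radius $\sqrt\beta=(\sqrt L-\sqrt\mu)/(\sqrt L+\sqrt\mu)$. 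Your computations are correct, and the obstacle you flag is real---indeed it is a defect of the theorem statement rather than of your proof. Because the blocks at $\lambda=\mu$ and $\lambda=L$ carry genuine $2\times 2$ Jordan structure, $\|T^k\|$ grows like $k\rho^k$ and the constant-free bound cannot hold verbatim: already at $k=1$ the initialization $x_1=x_0$ forces $\|x_1-x^*\|=\|x_0-x^*\|>\rho\|x_0-x^*\|$ whenever $x_0\neq x^*$. What Polyak actually establishes is the asymptotic statement, $\|x_k-x^*\|\le C(\epsilon)\,(\rho+\epsilon)^k\|x_0-x^*\|$ for every $\epsilon>0$ (equivalently, the rate $\rho$ holds up to a constant and a factor polynomial in $k$). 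Your Steps one and two, together with that bookkeeping, constitute a complete and more honest proof of the result the paper intends; the only change I would make is to state the conclusion in that asymptotic form rather than as the literal inequality in the theorem.
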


According to Theorem~\ref{theo:HB}, when implementing the HB method, constant coefficients can be adopted for simplification, i.e., $\varepsilon=\frac{4}{{{{\left( {\sqrt L  + \sqrt \mu  } \right)}^2}}}$, $\beta = \frac{\sqrt{L}-\sqrt{\mu}}{\sqrt{L}+\sqrt{\mu}}$.

\subsubsection{Redesign via an Accelerated Gradient Descent Method}
Similar to the HB method, an accelerated gradient descent (AGD) method also uses a momentum term. But it constructs an arbitrary auxiliary sequence $y_k$. 
AGD achieves the optimal convergence rate $O(1/\varepsilon k^2)$ when objectives belong to function class $\mathscr{F}_L$. Algorithm~\ref{alg:AGD} shows its iterations.
\begin{algorithm}
\caption{Accelerated gradient descent method}\label{alg:AGD}
\textbf{Setting:} Choose appropriate positive step size $\varepsilon$, and let $x_1 = x_0$ be an arbitrary initial condition.    
\begin{subequations} \label{eq:AGD}
\begin{align}
{x_{k + 1}} &= {y_k} - \varepsilon \nabla f\left( {{y_k}} \right) \label{AGD_1} \\
{y_k} &= {x_k} + \beta_k \left( {{x_k} - {x_{k - 1}}} \right) \label{AGD_2}
\end{align}
\end{subequations}
\end{algorithm}


For any system~\eqref{eq:LTI-sys} in Class-$\mathcal{O}$, apply the AGD method to solve problem~\eqref{eq:uncons} obtained via reverse-engineering to improve system performance. To show that the implementation is equivalent to introducing an extra part to the original GD dynamics,~\eqref{eq:AGD} is rearranged by interchanging the notations of $x$ and $y$ and combining the two equations:
\begin{align} \label{eq:AGD_combine}
    x_{k + 1}  = x_k  - \varepsilon \nabla f(x_k ) + \underbrace {\beta_k (y_{k + 1}  - y_k )}_{\Delta u_k }
\end{align}
where $\Delta u_k  = \beta_k (y_{k + 1}  - y_k ) = \beta_k \big[x_k  - \varepsilon \nabla f(x_k ) - x_{k - 1}  + \varepsilon \nabla f(x_{k - 1} )\big]$. 
It is clear that~\eqref{eq:AGD_combine} is the sum of the gradient descent formula~\eqref{eq:GD} and extra dynamics $\Delta u_k$. 
Therefore, the redesigned system via the AGD method is 
\begin{align}\label{eq:AGD_re}
x_{k+1}=Ax_k+Cw+\Delta u_k
\end{align}
where $\Delta u_k  =  \beta_k \big(Ax_k+Cw - Ax_{k-1}-Cw^{\prime} \big)$ and $w^{\prime}$ is the value of $w$ at time step $k-1$.  Theorems~\ref{theo:AGD} and~\ref{theo:AGD_stro} summarize the convergence property when applying the AGD method to redesign.

\begin{theo}\label{theo:AGD}
For any system~\eqref{eq:LTI-sys} in Class-$\mathcal{O}$, if the objective function $f$ obtained via reverse-engineering belongs to $\mathscr{F}_L$, the step size $0<\varepsilon \le 1/L$ and $\beta_k = \frac{{k - 1}}{{k + 2}}$, then the convergence rate of the redesigned system via the AGD method~(\ref{eq:AGD_re}) is (See Theorem 2.2.2 in \cite{nesterov2018lectures} for the proof)
\begin{align}
f\left( x_k \right) - {f(x^*)} \le \frac{{8\left\| {x_0  - x^* } \right\|^2 }}{3{\varepsilon (k + 1)^2 }} = O\left(\frac{1}{\varepsilon k^2}\right), \quad \forall k \in \mathbb{Z}^+. \nonumber
\end{align}
\end{theo}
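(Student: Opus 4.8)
The statement is the classical $O(1/k^2)$ guarantee for Nesterov's accelerated gradient method on smooth convex objectives, so the plan is to record that the redesigned system \emph{is} that method and then reproduce (or cite) its standard analysis. First I would note that, by construction,~\eqref{eq:AGD_re} is exactly the rearrangement~\eqref{eq:AGD_combine} of the AGD iteration~\eqref{eq:AGD} applied to the reverse-engineered objective $f\in\mathscr{F}_L$ — using that, for a Class-$\mathcal{O}$ system, the affine map $x_k\mapsto Ax_k+Cw$ coincides with the gradient step $x_k-\varepsilon\nabla f(x_k)$ appearing in~\eqref{eq:GD} — so it suffices to establish the rate for~\eqref{eq:AGD}. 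The proof uses the estimate-sequence / Lyapunov-function technique: construct a nonnegative potential $\Phi_k$ that does not increase along the iterates and whose coefficient multiplying $f(x_k)-f(x^*)$ grows like $\varepsilon k^2$.

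The first ingredient is the one-step inequality produced by a gradient step taken from the extrapolated point $y_k$. Since $0<\varepsilon\le 1/L$, the descent lemma for $L$-smooth functions gives $f(x_{k+1})\le f(y_k)-\tfrac{\varepsilon}{2}\|\nabla f(y_k)\|^2$, and combining this with the convexity bound $f(y_k)\le f(z)+\langle\nabla f(y_k),\,y_k-z\rangle$ yields, for every $z\in\mathbb{R}^n$,
\begin{align*}
f(x_{k+1}) \;\le\; f(z) + \langle\nabla f(y_k),\,y_k-z\rangle - \frac{\varepsilon}{2}\left\|\nabla f(y_k)\right\|^2 .
\end{align*}

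The second ingredient is the scalar and auxiliary ``lead'' sequences. Take $t_k$ with $t_1=1$ satisfying $t_{k+1}^2-t_{k+1}\le t_k^2$ — the choice $t_k=(k+1)/2$ works and yields precisely $\beta_k=(t_k-1)/t_{k+1}=\tfrac{k-1}{k+2}$, consistently with $\beta_1=0$ and $x_1=x_0$ — and introduce $v_k$ by $v_1=x_0$, $v_{k+1}=v_k-\varepsilon t_{k+1}\nabla f(y_k)$, normalized so that $y_k$ equals the convex combination of $x_k$ and $v_k$ dictated by the $t_k$'s. Applying the one-step inequality with $z=x_k$ and with $z=x^*$, forming the appropriate convex combination, and rewriting the surviving cross term $\langle\nabla f(y_k),\,v_k-x^*\rangle$ by completing the square against $\tfrac12\|v_{k+1}-x^*\|^2-\tfrac12\|v_k-x^*\|^2$ (using the $v_{k+1}$ update), one arrives at
\begin{align*}
\Phi_{k+1} \;\le\; \Phi_k, \qquad \Phi_k := \varepsilon\, t_k^2\,\big(f(x_k)-f(x^*)\big) + \tfrac12\left\|v_k-x^*\right\|^2 ,
\end{align*}
where the inequality $t_{k+1}^2-t_{k+1}\le t_k^2$ is exactly what absorbs the smoothness term $\|\nabla f(y_k)\|^2$. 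Telescoping, bounding $\Phi_1$ via the smoothness estimate $f(x_0)-f(x^*)\le\tfrac{1}{2\varepsilon}\|x_0-x^*\|^2$, and using the elementary growth bound $t_k^2=\Omega(k^2)$ (with the explicit constant that produces the factor $8/3$, as in~\cite{nesterov2018lectures}) then gives $f(x_k)-f(x^*)\le\tfrac{8\|x_0-x^*\|^2}{3\varepsilon(k+1)^2}=O(1/(\varepsilon k^2))$.

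The delicate part — essentially the only non-routine step — is the bookkeeping in the second ingredient: one must verify the exact identities tying $\beta_k$, the $t_k$'s, and the $v_k$-update together so that the inner-product terms cancel cleanly against the difference of squared distances to $x^*$, and must check that $t_k=(k+1)/2$ simultaneously yields $\beta_k=\tfrac{k-1}{k+2}$, satisfies $t_{k+1}^2-t_{k+1}\le t_k^2$, and grows like $k^2$. The descent lemma, convexity, the telescoping, and the final elementary inequality are all routine. Since this computation is carried out in Theorem 2.2.2 of~\cite{nesterov2018lectures}, once the equivalence of~\eqref{eq:AGD_re} and~\eqref{eq:AGD} has been recorded one may alternatively just invoke that reference.
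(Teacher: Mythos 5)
Your proposal is correct and follows essentially the same route as the paper, which supplies no proof of its own and simply points to Theorem 2.2.2 of~\cite{nesterov2018lectures}; your potential-function sketch is the standard argument behind that theorem, with the key identities correctly verified ($t_k=(k+1)/2$ gives $\beta_k=\frac{k-1}{k+2}$, satisfies $t_{k+1}^2-t_{k+1}\le t_k^2$, and grows like $k^2$). The one caveat is that the plain telescoping bound $\Phi_k\le\Phi_1$ combined with $f(x_0)-f(x^*)\le\frac{1}{2\varepsilon}\|x_0-x^*\|^2$ yields the constant $4$ rather than $8/3$ in front of $\frac{\|x_0-x^*\|^2}{\varepsilon(k+1)^2}$, so the sharper constant in the statement genuinely requires the finer estimate-sequence bookkeeping of the cited theorem, which you appropriately delegate to the reference.
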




\begin{theo}\label{theo:AGD_stro}
For any system~\eqref{eq:LTI-sys} in Class-$\mathcal{O}$, if the objective function $f$ obtained via reverse-engineering belongs to $\mathscr{S}_{\mu,L}$, the step size $\varepsilon=\frac{1}{L}$ and $\beta_k = \frac{\sqrt{L}-\sqrt{\mu}}{\sqrt{L}+\sqrt{\mu}}$ (constant step scheme III in~\cite{nesterov2018lectures}), then the redesigned system via the AGD method~(\ref{eq:AGD_re}) is with the optimal convergence rate~\cite{Ghadimi2015Global}
\begin{align}
\left\| {x_k  -  x^* } \right\| \le \left(1 - \sqrt{\frac{\mu}{L}}\right)^k \left\| {x_0 -  x^* } \right\| =O({e^{-k\frac{\sqrt{\mu}}{\sqrt L}}}), \quad \forall k \in \mathbb{Z}^+.\nonumber   
\end{align}
\end{theo}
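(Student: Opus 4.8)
The plan is to recognize that the redesigned dynamics \eqref{eq:AGD_re} are, after the change of variables used to pass from \eqref{eq:AGD} to \eqref{eq:AGD_combine}, algebraically identical to the Nesterov accelerated gradient iteration \eqref{eq:AGD} applied to the reverse-engineered objective $f$. Hence it suffices to prove the claimed geometric rate for AGD with $\varepsilon = 1/L$ and constant momentum $\beta_k = \beta := \frac{\sqrt L - \sqrt\mu}{\sqrt L + \sqrt\mu}$ when $f \in \mathscr{S}_{\mu,L}$; this is precisely ``constant step scheme III'' of \cite{nesterov2018lectures} together with the analysis of \cite{Ghadimi2015Global}. I would present the argument in two layers: a Lyapunov / estimate-sequence argument valid for general $f\in\mathscr{S}_{\mu,L}$, and a sharper spectral argument that exploits the fact (established in Section~\ref{se:prelim}) that for linear Class-$\mathcal{O}$ systems $f$ is a convex quadratic, which is what produces the stated clean ratio.

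For the first layer, I would introduce the companion ``momentum'' sequence $v_k$ implicit in the extrapolation step (so that $y_k$ is a convex combination of $x_k$ and $v_k$) and the Lyapunov function $\mathcal{L}_k = \big(f(x_k) - f(x^*)\big) + \tfrac{\mu}{2}\|v_k - x^*\|^2$. Using $L$-smoothness to bound $f(x_{k+1})$ in terms of $f(y_k)$ and $\|\nabla f(y_k)\|$ (the descent lemma with $\varepsilon = 1/L$), and $\mu$-strong convexity to relate $f(y_k)$ to $f(x_k)$, $f(x^*)$ and $\|v_k - x^*\|$, the choice $\beta = \frac{1-\sqrt{\mu/L}}{1+\sqrt{\mu/L}}$ is exactly the one that closes the inequality $\mathcal{L}_{k+1} \le (1-\sqrt{\mu/L})\,\mathcal{L}_k$. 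Iterating gives $f(x_k) - f(x^*) \le (1-\sqrt{\mu/L})^k \mathcal{L}_0$, and since $\tfrac{\mu}{2}\|x_k-x^*\|^2 \le f(x_k)-f(x^*)$ and $\mathcal{L}_0 \le \tfrac{L}{2}\|x_0-x^*\|^2$ one recovers the distance bound of Theorem~\ref{theo:AGD_stro} (up to the usual condition-number constant).

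To obtain the constant-free ratio as written, I would specialize to $\nabla f(x) = Q(x-x^*)$ with $\mu I_n \preceq Q \preceq L I_n$. Substituting into \eqref{eq:AGD}, the error $e_k := x_k - x^*$ obeys the linear two-step recursion $e_{k+1} = (I_n - \varepsilon Q)\big[(1+\beta)e_k - \beta e_{k-1}\big]$. Diagonalizing $Q$ decouples this into scalar recursions indexed by the eigenvalues $\lambda \in [\mu, L]$, each with characteristic equation $z^2 - (1-\varepsilon\lambda)(1+\beta)z + (1-\varepsilon\lambda)\beta = 0$. A short discriminant computation shows that for $\varepsilon = 1/L$ and the stated $\beta$ the two roots are complex conjugates for every $\lambda \in (\mu,L]$ (and form a defective double root at $\lambda=\mu$), so their modulus equals $\sqrt{(1-\lambda/L)\beta}$, which is maximized at $\lambda=\mu$ with value $\sqrt{(1-\mu/L)\beta} = \frac{\sqrt L-\sqrt\mu}{\sqrt L} = 1-\sqrt{\mu/L}$. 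Thus the spectral radius of the lifted iteration matrix is exactly $1-\sqrt{\mu/L}$, which yields the asserted rate; a uniform non-asymptotic bound then follows by working in a Jordan-adapted norm and absorbing the resulting constant, as in \cite{Ghadimi2015Global}.

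The main obstacle is the lack of monotonicity: unlike the gradient-descent case of Theorem~\ref{theo:GD_stro}, AGD need not decrease $f$ at every step, so no elementary contraction of $\|x_k-x^*\|$ is available. The technical heart is therefore (i) identifying the correct Lyapunov coupling between the iterate and the momentum variable and checking that $\varepsilon=1/L$, $\beta=\frac{\sqrt L-\sqrt\mu}{\sqrt L+\sqrt\mu}$ produce the one-step factor $1-\sqrt{\mu/L}$, and (ii) in the quadratic case, handling the non-normality of the $2\times 2$ blocks (the defective eigenvalue at $\lambda=\mu$) so that the spectral rate translates into a bona fide bound. Since the statement merely quotes the optimal rate from \cite{nesterov2018lectures,Ghadimi2015Global}, one may alternatively just invoke those results directly after noting the equivalence of \eqref{eq:AGD_re} with \eqref{eq:AGD}.
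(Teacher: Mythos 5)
The paper offers no proof of this theorem at all: it delegates entirely to ``constant step scheme III'' of \cite{nesterov2018lectures} and to \cite{Ghadimi2015Global}. Your proposal therefore does strictly more than the paper, and both of your layers are sound. The first layer is precisely the estimate-sequence/Lyapunov argument behind scheme III: with $\mathcal{L}_k=f(x_k)-f(x^*)+\tfrac{\mu}{2}\|v_k-x^*\|^2$ the stated $\varepsilon,\beta$ do give $\mathcal{L}_{k+1}\le(1-\sqrt{\mu/L})\mathcal{L}_k$, and you correctly note that converting to a distance bound halves the exponent and introduces a condition-number constant. The second, spectral layer is a genuinely different route and arguably the more natural one here, since for linear Class-$\mathcal{O}$ systems the reverse-engineered $f$ is a convex quadratic: the two-step recursion $e_{k+1}=(I_n-\varepsilon Q)\bigl[(1+\beta)e_k-\beta e_{k-1}\bigr]$, the per-eigenvalue characteristic polynomial, the discriminant computation giving complex roots on $(\mu,L]$ with a defective double root at $\lambda=\mu$, and the resulting spectral radius $1-\sqrt{\mu/L}$ all check out; this buys the exponent $k$ rather than $k/2$. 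One point your analysis surfaces that is worth stating explicitly: the inequality cannot hold verbatim with multiplicative constant $1$. The Jordan block at $\lambda=\mu$ produces an $(a+bk)\rho^k$ mode, and more simply the initialization $x_1=x_0$ already forces $\|x_1-x^*\|=\|x_0-x^*\|>(1-\sqrt{\mu/L})\|x_0-x^*\|$ for any $x_0\neq x^*$. So the theorem is correct only as the rate statement $O(e^{-k\sqrt{\mu/L}})$, i.e., with a multiplicative constant (equivalently, with any factor $\tilde\rho>1-\sqrt{\mu/L}$), exactly as your Jordan-adapted-norm remark anticipates. Your proof is a valid proof of that rate and, in passing, repairs an overstatement in the theorem as printed.
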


According to Theorem~\ref{theo:AGD_stro}, when implementing the AGD method, constant coefficients can be adopted for simplification, i.e., $\varepsilon=\frac{1}{L}$, $\beta_k = \frac{\sqrt{L}-\sqrt{\mu}}{\sqrt{L}+\sqrt{\mu}}$ in Algorithm~\ref{alg:AGD} to achieve the optimal convergence rate.

Note that although both Theorem~\ref{theo:HB} and Theorem~\ref{theo:AGD_stro} require $f \in \mathscr{S}_{\mu,L}$ and the constant coefficient $\beta/\beta_k$ is related to $\mu$, our HB/AGD-based redesign is not restricted to strongly convex functions (obtained via reverse-engineering) only. It is still effective for $f \in \mathscr{F}_{L}$, but the convergence rates are unclear when coefficients are constants~\cite{Ghadimi2015Global}. In this case, we can adjust $\beta/\beta_k$ manually. Note that increasing the value of $\beta/\beta_k$ enlarges the influence of the momentum term $x_k-x_{k-1}$ in~\eqref{eq:HB_uk} and $y_{k+1}-y_{k}$ in~\eqref{eq:AGD_combine}. 

TABLE~\ref{table:conv} compares the convergence rates of the original and redesigned systems under HB and AGD methods.  For any system~\eqref{eq:LTI-sys} in Class-$\mathcal{O}$, 
When the objective function $f$ obtained via reverse-engineering is in $\mathscr{F}_L$, the redesigned system via the AGD method has a guaranteed better convergence rate than the original system. When $f \in \mathscr{S}_{\mu,L}$, the optimal convergence rate of the redesigned system via the HB method is always better than that of the original system and the redesigned system via the AGD method. 
\footnotetext[4]{For objectives in this class, we compare the optimal convergence rate for convenience.}

\begin{table}[hb]
\begin{center}
\caption{Convergence rate comparison.}\label{table:conv}
\begin{tabular}{c|c|c|c}

\hline \hline
$f$ Classes   & Original & Apply HB & Apply AGD  \\ 
\hline
$f \in \mathscr{F}_L$  & $O\left( {\frac{1}{{\varepsilon k}}} \right)$ & $\diagup$ & $O\left(\frac{1}{{\varepsilon k^2 }}\right)$   \\ \hline
$f \in \mathscr{S}_{\mu,L}$\footnotemark[4] & $O({e^{ - k \frac{2\mu}{L+\mu}}})$ & $O({e^{-k\frac{2\sqrt{\mu}}{\sqrt L + \sqrt{\mu}}}})$ &  $O({e^{-k\frac{\sqrt{\mu}}{\sqrt L}}})$ \\ \hline\hline
\end{tabular}
\end{center}
\end{table}

\subsection{Systems in Class-$\mathcal{S}$}
\subsubsection{Convergence Rate of the Original Systems}
Any system~\eqref{eq:LTI-sys} in Class-$\mathcal{S}$ can be reverse-engineered as a primal-dual gradient (PDG) algorithm to solve a convex-concave saddle-point problem, so the convergence rate of system~\eqref{eq:LTI-sys} or \eqref{equ:linearsystemsplit} follows that of a PDG algorithm. Specifically, we focus on $A_{11}=I_{n_1}$ in~\eqref{equ:linearsystemsplit} so that the optimization problem obtained via reverse-engineering is
\begin{align}
    \mathop {\max }\limits_{\lambda  \in \mathbb{R}^m}  \mathop {\min }\limits_{x \in \mathbb{R}^n} \mathcal{L}(x,\lambda ) = f(x) + {\lambda ^T}(Bx - b) \label{eq:dual}
\end{align}
where $x \in \mathbb{R}^n$, $\lambda \in \mathbb{R}^m$ is the Lagrangian multiplier vector (dual variable vector), $B \in \mathbb{R}^{m \times n}$ and $b \in  \mathbb{R}^{m}$. Here we have rearranged~(\ref{equ:classSopt}) as above and replaced $x^{(1)},x^{(2)}, n_1,n_2$ with $\lambda, x,m,n$. Notations $f,n$ are abused in contrast to $f,n$ in~(\ref{equ:classSopt}).

\begin{ass} \label{ass:rank_B}
Matrix B is full row rank.
\end{ass}

The corresponding primal problem of~\eqref{eq:dual} is an equality constrained convex optimization problem given by
\begin{align}
   \min_{x \in \mathbb{R}^n } f\left(x \right) \quad
 \text{s.t.} {\kern 1pt} {\kern 1pt} {\kern 1pt} {\kern 1pt} {\kern 1pt} {\kern 1pt} Bx = b. \label{eq:primal}   
\end{align}
A PDG algorithm is the simplest algorithm to solve problem~\eqref{eq:dual}, as shown in Algorithm~\ref{alg:PDG}.
\begin{algorithm}
\caption{First-order primal-dual gradient method}
\label{alg:PDG}
\textbf{Setting:} Choose appropriate positive step sizes $\varepsilon_1$, $\varepsilon_2$, and let $x_0$ and $\lambda_0$ be arbitrary initial conditions.   
\begin{subequations} \label{eq:PDG}
\begin{align}
{x_{k + 1}} &= {x_k} - {\varepsilon _1}(\nabla f(x_k) + {B^T}{\lambda _k}) \label{alg:PDG_1}\\
{\lambda _{k + 1}} &= {\lambda _k} + {\varepsilon _2}(B{x_k} - b) \label{alg:PDG_2}
\end{align}
\end{subequations}
\end{algorithm}

Assume that a saddle point of $\mathcal{L}$ exists and is denoted by $(x^*, \lambda^*)$, which satisfies the optimality conditions:
\begin{subequations} \label{eq:L_opt}
\begin{align}
\nabla_x {\mathcal{L}} &= \nabla f({x^*}) + {B^T}{\lambda ^*} = 0 \label{eq:optimality_x}\\
\nabla_{\lambda} {\mathcal{L}} &= Bx^*-b = 0.
\end{align}
\end{subequations}

Furthermore, assume the objective function obtained via reverse-engineering $f \in \mathscr{S}_{\mu,L}$. Since the primal problem is strongly convex and the constraint is affine, strong duality holds. Therefore, $x^*$ is unique and is an optimal solution of the primal problem~\eqref{eq:primal}. When Assumption~\ref{ass:rank_B} holds, $\lambda^*$ is unique.

Introduce the conjugate function of $f(x)$: $f^{\star} (y) = \mathop {\sup }\limits_{x \in \mathbb{R}^n} \left\{ {\left\langle {x,y} \right\rangle  - f(x)} \right\}$ for all $y \in \mathbb{R}^n$, and~\eqref{eq:dual} can be further expressed as 
$\mathop {\min }\limits_{\lambda \in \mathbb{R}^m}  {\rm{ }}{f^{\star}}( - {B^T}\lambda ) + {\lambda ^T}b$. Theorem~\ref{theo:PDG_conv} summarizes the convergence rate of the PDG algorithm/original system.

%

 \begin{theo} \label{theo:PDG_conv}
For any system~\eqref{eq:LTI-sys} in Class-$\mathcal{S}$, if the objective function $f$ in~(\ref{eq:dual}) obtained via reverse-engineering belongs to $\mathscr{S}_{\mu,L}$, Assumption~\ref{ass:rank_B} holds and step sizes $\varepsilon _1 , \varepsilon _2$ satisfy $0 < \varepsilon _1  \le \frac{2}{{L + \mu }}$, $0 < \varepsilon _2  \le \frac{2}{{\sigma _{\min }^2(B) /L + \sigma _{\max }^2(B) /\mu }}$ and $c< 1$, then this system converges to the unique saddle point $(x^*,\lambda^*)$ exponentially. Let $a_k=\left\| {x_k  - \nabla f^{\star} ( - B^T \lambda _k )} \right\|$, $b_k = \left\| {\lambda _k  - \lambda ^* } \right\|$ and define a potential function $V_k=\gamma a_k+b_k$, then for some constants $c, \gamma$ that depend on $\varepsilon_1, \varepsilon_2$, we have
\begin{align} \label{conv:PDG}
V_{k + 1}  \le c V_k, \quad  \forall k \in \mathbb{Z}^+.
\end{align}
where $\gamma>0$ and $c=\max \{ c_1 ,c_2 \}$ with $c_1  =  1 - \mu \varepsilon _1   + \frac{{\varepsilon _2 \sigma _{\max }^2(B) }}{\mu } + \frac{{\varepsilon _2 \sigma _{\max }(B) }}{\gamma }$ and $c_2  = 1 - \frac{{\varepsilon _2 \sigma _{\min }^2(B) }}{L} + \frac{{\varepsilon _2 \gamma \sigma _{\max }^3(B) }}{{\mu ^2 }}$.
\end{theo}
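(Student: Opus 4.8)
The plan is to track the primal and dual errors $a_k=\|x_k-\phi(\lambda_k)\|$ and $b_k=\|\lambda_k-\lambda^*\|$ separately, where $\phi(\lambda):=\nabla f^{\star}(-B^T\lambda)$ is the unique minimizer of the strongly convex map $x\mapsto\mathcal{L}(x,\lambda)$. Since $f\in\mathscr{S}_{\mu,L}$, the conjugate $f^{\star}$ is $\tfrac1L$-strongly convex with $\tfrac1\mu$-Lipschitz gradient, so $\phi$ is $\tfrac{\sigma_{\max}(B)}{\mu}$-Lipschitz, and $x^*=\phi(\lambda^*)$ by~\eqref{eq:optimality_x}. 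The first ingredient is a primal contraction toward the \emph{moving} target $\phi(\lambda_k)$: because $x\mapsto\mathcal{L}(x,\lambda_k)=f(x)+\lambda_k^T(Bx-b)$ differs from $f$ by an affine term it inherits the constants $\mu,L$, so~\eqref{alg:PDG_1} is an ordinary gradient-descent step for this function and the argument of Theorem~\ref{theo:GD_stro} (applicable since $0<\varepsilon_1\le\tfrac{2}{L+\mu}$) yields $\|x_{k+1}-\phi(\lambda_k)\|\le(1-\mu\varepsilon_1)\,a_k$.

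Next I would control $\|\lambda_{k+1}-\lambda_k\|=\varepsilon_2\|Bx_k-b\|$. Splitting $Bx_k-b=B(x_k-\phi(\lambda_k))+(B\phi(\lambda_k)-b)$ and recognizing the second summand as the dual gradient $\nabla d(\lambda_k)$ for $d(\lambda):=-f^{\star}(-B^T\lambda)-\lambda^Tb$, I would use $\nabla d(\lambda^*)=0$ together with the fact that $\lambda\mapsto f^{\star}(-B^T\lambda)$ has Hessian $B\nabla^2f^{\star}(\cdot)B^T$ sandwiched between $\tfrac{\sigma_{\min}^2(B)}{L}I$ and $\tfrac{\sigma_{\max}^2(B)}{\mu}I$ (Assumption~\ref{ass:rank_B} makes $BB^T\succ0$), giving $\|Bx_k-b\|\le\sigma_{\max}(B)\,a_k+\tfrac{\sigma_{\max}^2(B)}{\mu}\,b_k$. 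Plugging this into $a_{k+1}\le\|x_{k+1}-\phi(\lambda_k)\|+\|\phi(\lambda_k)-\phi(\lambda_{k+1})\|$ and using the Lipschitz constant of $\phi$ produces the recursion $a_{k+1}\le\big[(1-\mu\varepsilon_1)+\tfrac{\varepsilon_2\sigma_{\max}^2(B)}{\mu}\big]a_k+\tfrac{\varepsilon_2\sigma_{\max}^3(B)}{\mu^2}\,b_k$.

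For the dual error I would write $\lambda_{k+1}-\lambda^*=\big[(\lambda_k-\lambda^*)+\varepsilon_2\nabla d(\lambda_k)\big]+\varepsilon_2B(x_k-\phi(\lambda_k))$: the bracket is a gradient-ascent step on the concave $d$, whose negative is $\tfrac{\sigma_{\min}^2(B)}{L}$-strongly convex with $\tfrac{\sigma_{\max}^2(B)}{\mu}$-Lipschitz gradient, and the stated range $0<\varepsilon_2\le\tfrac{2}{\sigma_{\min}^2(B)/L+\sigma_{\max}^2(B)/\mu}$ is precisely what lets Theorem~\ref{theo:GD_stro} deliver the factor $1-\tfrac{\varepsilon_2\sigma_{\min}^2(B)}{L}$, while the residual term is at most $\varepsilon_2\sigma_{\max}(B)\,a_k$; hence $b_{k+1}\le\varepsilon_2\sigma_{\max}(B)\,a_k+\big(1-\tfrac{\varepsilon_2\sigma_{\min}^2(B)}{L}\big)b_k$. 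Adding $\gamma$ times the $a$-recursion to the $b$-recursion, the coefficient of $a_k$ becomes $\gamma(1-\mu\varepsilon_1)+\tfrac{\gamma\varepsilon_2\sigma_{\max}^2(B)}{\mu}+\varepsilon_2\sigma_{\max}(B)=\gamma c_1$ and the coefficient of $b_k$ becomes $\tfrac{\gamma\varepsilon_2\sigma_{\max}^3(B)}{\mu^2}+1-\tfrac{\varepsilon_2\sigma_{\min}^2(B)}{L}=c_2$, so $V_{k+1}\le\gamma c_1a_k+c_2b_k\le\max\{c_1,c_2\}(\gamma a_k+b_k)=cV_k$, since $a_k,b_k,\gamma\ge0$.

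The main obstacle is not any single inequality but the coupling of the constants: the cross-terms tie the primal and dual errors together, and the hypothesis $c<1$ is only non-vacuous for a suitable weight $\gamma$ and a sufficiently small $\varepsilon_2$. Concretely, $c_2<1$ forces $\gamma<\tfrac{\mu^2\sigma_{\min}^2(B)}{L\sigma_{\max}^3(B)}$, and then $c_1<1$ forces $\varepsilon_2$ small relative to $\mu\varepsilon_1$, so that $\tfrac{\varepsilon_2\sigma_{\max}^2(B)}{\mu}+\tfrac{\varepsilon_2\sigma_{\max}(B)}{\gamma}<\mu\varepsilon_1$; I would verify that this window of admissible $(\gamma,\varepsilon_2)$ is nonempty and record it, which is where most of the bookkeeping lives. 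Everything else reduces to the scalar gradient-descent contraction of Theorem~\ref{theo:GD_stro} and to standard conjugate-duality facts for $f\in\mathscr{S}_{\mu,L}$.
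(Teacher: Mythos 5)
Your proposal is correct and follows essentially the same route as the paper's appendix proof: the same error decomposition into $a_k=\|x_k-\nabla f^{\star}(-B^T\lambda_k)\|$ and $b_k=\|\lambda_k-\lambda^*\|$, the same moving-target gradient-descent contraction for the primal step, the same auxiliary dual gradient step (your ascent on the concave $d=-g$ is the paper's descent on $g$), and the same weighted potential $V_k=\gamma a_k+b_k$ yielding $c=\max\{c_1,c_2\}$. The constraints you record on $\gamma$ and $\varepsilon_2$ for $c<1$ match the paper's closing remark.
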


\begin{proof}
See the Appendix.
\end{proof}

Note that the potential function $V_k$ decreases at a geometric rate and the error of $\left\| {{x_k} - {x^*}} \right\|$ and $\left\| {{\lambda _k} - {\lambda ^*}} \right\|$ are bounded by $V_k$: $\left\| {{\lambda _k} - {\lambda ^*}} \right\| \le {V_k}$ and $\left\| {{x_k} - {x^*}} \right\| \le \left\| {{x_k} - \nabla {f^{\star}}( - {B^T}{\lambda _k})} \right\| + \left\| {\nabla {f^{\star}}( - {B^T}{\lambda _k}) - {x^*}} \right\|
 \le {a_k} + \frac{{{\sigma _{\max }(B)}}}{\mu }{b_k}
 \le \text{max}\left\{ {\frac{1}{\gamma },\frac{{{\sigma _{\max }(B)}}}{\mu }} \right\}{V_k}$. Therefore, as $V_k$ approaches zero, $(x_k,\lambda_k)$ approaches the saddle point $(x^*,\lambda^*)$.



\begin{coro} \label{coro:opt_stepsize}
When the step sizes $\varepsilon_1 = \frac{2}{L+\mu}$ and $\varepsilon_2 = \left( \frac{\sigma_{\max}^2(B)}{\mu} + \frac{\sigma_{\max}(B)}{\gamma}+\frac{\sigma_{\min}^2(B)}{L}  -\frac{\gamma \sigma_{\max}^3(B)}{\mu^2} \right)^{-1} \frac{2\mu}{L+\mu}$, the optimal convergence rate is attained for the original system.
\end{coro}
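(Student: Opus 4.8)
The plan is to treat the bound $c=\max\{c_1,c_2\}$ furnished by Theorem~\ref{theo:PDG_conv} as the quantity to be minimized over the admissible step sizes $\varepsilon_1,\varepsilon_2$, with $\gamma>0$ regarded as a fixed, sufficiently small parameter of the potential function $V_k=\gamma a_k+b_k$, and to locate the minimizer by elementary one-variable-at-a-time monotonicity arguments. Recall $c_1=1-\mu\varepsilon_1+\tfrac{\varepsilon_2\sigma_{\max}^2(B)}{\mu}+\tfrac{\varepsilon_2\sigma_{\max}(B)}{\gamma}$ and $c_2=1-\tfrac{\varepsilon_2\sigma_{\min}^2(B)}{L}+\tfrac{\varepsilon_2\gamma\sigma_{\max}^3(B)}{\mu^2}$.

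First I would fix $\varepsilon_2$ and optimize over $\varepsilon_1$. Since $c_2$ does not depend on $\varepsilon_1$ and $c_1$ is strictly decreasing in $\varepsilon_1$ (its $\varepsilon_1$-coefficient is $-\mu<0$), both $c_1$ and $\max\{c_1,c_2\}$ are minimized by taking $\varepsilon_1$ as large as the constraint $0<\varepsilon_1\le\tfrac{2}{L+\mu}$ permits, namely $\varepsilon_1=\tfrac{2}{L+\mu}$. Substituting, $c_1=\tfrac{L-\mu}{L+\mu}+\varepsilon_2\big(\tfrac{\sigma_{\max}^2(B)}{\mu}+\tfrac{\sigma_{\max}(B)}{\gamma}\big)$, which is now strictly increasing in $\varepsilon_2$.

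Next I would study the dependence on $\varepsilon_2$. For $\gamma$ small enough that $\tfrac{\sigma_{\min}^2(B)}{L}-\tfrac{\gamma\sigma_{\max}^3(B)}{\mu^2}>0$, the function $c_2=1-\varepsilon_2\big(\tfrac{\sigma_{\min}^2(B)}{L}-\tfrac{\gamma\sigma_{\max}^3(B)}{\mu^2}\big)$ is strictly decreasing in $\varepsilon_2$; the same smallness condition makes the bracket in the claimed formula for $\varepsilon_2$ positive, hence $\varepsilon_2>0$. At $\varepsilon_2=0$ we have $c_1=\tfrac{L-\mu}{L+\mu}<1=c_2$, so $c_2>c_1$ for small $\varepsilon_2$, while $c_1\to\infty$; thus the increasing curve $c_1$ and the decreasing curve $c_2$ cross at a unique $\varepsilon_2^\star>0$, and since $\max\{c_1,c_2\}$ equals $c_2$ to the left of $\varepsilon_2^\star$ and $c_1$ to the right, it is minimized exactly at $\varepsilon_2^\star$. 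Solving $c_1=c_2$, i.e. $\tfrac{L-\mu}{L+\mu}+\varepsilon_2\big(\tfrac{\sigma_{\max}^2(B)}{\mu}+\tfrac{\sigma_{\max}(B)}{\gamma}\big)=1-\varepsilon_2\big(\tfrac{\sigma_{\min}^2(B)}{L}-\tfrac{\gamma\sigma_{\max}^3(B)}{\mu^2}\big)$ and using $1-\tfrac{L-\mu}{L+\mu}=\tfrac{2\mu}{L+\mu}$, gives $\varepsilon_2^\star=\big(\tfrac{\sigma_{\max}^2(B)}{\mu}+\tfrac{\sigma_{\max}(B)}{\gamma}+\tfrac{\sigma_{\min}^2(B)}{L}-\tfrac{\gamma\sigma_{\max}^3(B)}{\mu^2}\big)^{-1}\tfrac{2\mu}{L+\mu}$, which is precisely the expression in the statement.

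Finally I would verify feasibility: the resulting pair lies in the admissible region of Theorem~\ref{theo:PDG_conv}, since for small $\gamma$ the denominator of $\varepsilon_2^\star$ strictly exceeds $\tfrac{\sigma_{\min}^2(B)}{L}+\tfrac{\sigma_{\max}^2(B)}{\mu}$ while its numerator $\tfrac{2\mu}{L+\mu}<2$, so $\varepsilon_2^\star<\tfrac{2}{\sigma_{\min}^2(B)/L+\sigma_{\max}^2(B)/\mu}$; one then checks $c=c_1(\varepsilon_2^\star)=c_2(\varepsilon_2^\star)<1$ at the optimum, so Theorem~\ref{theo:PDG_conv} indeed applies and certifies exponential convergence at this optimal rate. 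The main obstacle I anticipate is not the optimization, which is routine, but cleanly handling the role of $\gamma$: the monotonicity of $c_2$, the positivity of $\varepsilon_2^\star$, and the feasibility bound all rest on $\gamma$ being sufficiently small, so the tidiest writeup fixes such a $\gamma$ at the outset and records the required smallness condition explicitly (alternatively one could optimize over $\gamma$ as well, at the cost of a messier closed form).
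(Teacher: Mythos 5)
Your proposal is correct and follows essentially the same route as the paper: push $\varepsilon_1$ to its upper limit $\tfrac{2}{L+\mu}$ since only $c_1$ depends on it and decreases in it, then exploit that $c_1$ is increasing and $c_2$ decreasing in $\varepsilon_2$ (the latter requiring $\gamma<\tfrac{\mu^2\sigma_{\min}^2(B)}{L\sigma_{\max}^3(B)}$, which the paper extracts from the condition $c<1$) to place the minimizer of $\max\{c_1,c_2\}$ at the crossing $c_1=c_2$, yielding the stated $\varepsilon_2$. Your additional checks on the sign of the bracket, the existence and uniqueness of the crossing, and feasibility are more explicit than the paper's one-paragraph argument but do not change the method.
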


\begin{proof}
Since the geometric factor is determined by $c$, the convergence rate is optimal when $c$ is minimized. For a specific problem, parameters including $\mu, L, \gamma, \sigma_{\max}(B),$ $\sigma_{\min}(B)$ are fixed and we are able to adjust step sizes only. It is straightforward to notice that $c_1  =  1 - \mu \varepsilon _1   + \frac{{\varepsilon _2 \sigma _{\max }^2(B) }}{\mu } + \frac{{\varepsilon _2 \sigma _{\max }(B) }}{\gamma }$ is monotonically decreasing in $\varepsilon_1$ and increasing in $\varepsilon_2$ while $c_2  = 1 - \frac{{\varepsilon _2 \sigma _{\min }^2(B) }}{L} + \frac{{\varepsilon _2 \gamma \sigma _{\max }^3(B) }}{{\mu ^2 }}$ is monotonically decreasing in $\varepsilon_2$ since $\gamma < \frac{\mu^2 \sigma_{\min}^2(B)}{L \sigma_{\max}^3(B)}$ (due to $c<1$). Thus, $c$ is minimized when $\varepsilon _1$ takes its upper limit $\frac{2}{{L + \mu }}$ and $c_1 = c_2$, from which we obtain the value of $\varepsilon_2$ as given in this corollary.
\end{proof}

%

\begin{coro} \label{coro:opt_rate}
Suppose $\gamma {\rm{ = }}\frac{{\mu ^{\rm{2}} \sigma _{\min }^2 (B)}}{{2L\sigma _{\max }^3 (B)}} $ and step sizes are chosen as in Corollary~\ref{coro:opt_stepsize}, then we have $c\le 1 - \frac{1}{{\kappa ^3 (4\tau ^2 + 2\tau + 1 )}} $, where $\kappa = \frac{L}{\mu} $ is the condition number and $\tau = \frac{\sigma_{\max}^2(B)}{\sigma_{\min}^2(B)}$.
\end{coro}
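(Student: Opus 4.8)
The plan is to compute $c$ in closed form under the stated choices and then bound it by elementary inequalities using only $\kappa\ge 1$ and $\tau\ge 1$. Recall from the proof of Corollary~\ref{coro:opt_stepsize} that the prescribed step sizes are precisely those making $c_1=c_2$, so it suffices to evaluate $c=c_2=1-\frac{\varepsilon_2\sigma_{\min}^2(B)}{L}+\frac{\varepsilon_2\gamma\sigma_{\max}^3(B)}{\mu^2}$. Substituting $\gamma=\frac{\mu^2\sigma_{\min}^2(B)}{2L\sigma_{\max}^3(B)}$ collapses the last term to $\frac{\varepsilon_2\sigma_{\min}^2(B)}{2L}$, hence $c=1-\frac{\varepsilon_2\sigma_{\min}^2(B)}{2L}$. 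It is worth noting in passing that this $\gamma$ obeys $\gamma<\frac{\mu^2\sigma_{\min}^2(B)}{L\sigma_{\max}^3(B)}$, the inequality invoked in Corollary~\ref{coro:opt_stepsize} to guarantee $c<1$; the extra factor $1/2$ makes this automatic, so the hypotheses are mutually consistent.

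Next I would plug in the explicit value of $\varepsilon_2$ from Corollary~\ref{coro:opt_stepsize} with the same $\gamma$. Using $\frac{\sigma_{\max}(B)}{\gamma}=\frac{2L\sigma_{\max}^4(B)}{\mu^2\sigma_{\min}^2(B)}$ and $\frac{\gamma\sigma_{\max}^3(B)}{\mu^2}=\frac{\sigma_{\min}^2(B)}{2L}$, the bracketed denominator of $\varepsilon_2$ becomes $\frac{\sigma_{\max}^2(B)}{\mu}+\frac{2L\sigma_{\max}^4(B)}{\mu^2\sigma_{\min}^2(B)}+\frac{\sigma_{\min}^2(B)}{2L}$, so that $c=1-\frac{\mu\sigma_{\min}^2(B)}{L(L+\mu)}\big(\frac{\sigma_{\max}^2(B)}{\mu}+\frac{2L\sigma_{\max}^4(B)}{\mu^2\sigma_{\min}^2(B)}+\frac{\sigma_{\min}^2(B)}{2L}\big)^{-1}$. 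Clearing the compound fraction, dividing numerator and denominator by $\mu\sigma_{\min}^2(B)$, and rewriting everything in terms of $\kappa=L/\mu$ and $\tau=\sigma_{\max}^2(B)/\sigma_{\min}^2(B)$ gives, after collecting terms, the clean formula $c=1-\dfrac{2}{(\kappa+1)\big(4\kappa^2\tau^2+2\kappa\tau+1\big)}$.

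It then remains to verify $\dfrac{2}{(\kappa+1)(4\kappa^2\tau^2+2\kappa\tau+1)}\ge\dfrac{1}{\kappa^3(4\tau^2+2\tau+1)}$, i.e. $2\kappa^3(4\tau^2+2\tau+1)\ge(\kappa+1)(4\kappa^2\tau^2+2\kappa\tau+1)$. Since $\kappa\ge1$ and $\tau\ge1$, I would bound the right-hand side termwise: $(\kappa+1)(4\kappa^2\tau^2+2\kappa\tau+1)\le 2\kappa(4\kappa^2\tau^2+2\kappa\tau+1)=8\kappa^3\tau^2+4\kappa^2\tau+2\kappa\le 8\kappa^3\tau^2+4\kappa^3\tau+2\kappa^3=2\kappa^3(4\tau^2+2\tau+1)$, which closes the argument. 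The only genuine labor is the algebraic reduction in the second step: carefully clearing the nested fraction defining $\varepsilon_2$ and tracking the several singular-value and condition-number factors so that the problem-dependent constants cancel down to $\kappa$ and $\tau$ alone. That cancellation hinges on the particular coefficient $1/2$ in the choice of $\gamma$, so I expect that bookkeeping step to be the main place an error could creep in; there is no conceptual obstacle beyond it.
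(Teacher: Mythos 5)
Your proposal is correct and follows essentially the same route as the paper: both evaluate $c=c_1=c_2=1-\frac{\varepsilon_2\sigma_{\min}^2(B)}{2L}$ under the given $\gamma$, substitute the explicit $\varepsilon_2$ from Corollary~\ref{coro:opt_stepsize}, and then bound the resulting expression using only $\mu\le L$ (your normalization to $\kappa,\tau$ before bounding versus the paper's direct term-by-term estimates is a cosmetic difference, and your closed form $c=1-\frac{2}{(\kappa+1)(4\kappa^2\tau^2+2\kappa\tau+1)}$ agrees with the paper's expression). The only item in the paper's proof you omit is the check that this $\varepsilon_2$ still satisfies the step-size bound required by Theorem~\ref{theo:PDG_conv}, which is not needed for the purely algebraic inequality on $c$ but is needed for that $c$ to actually govern the convergence.
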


\begin{proof}
First we show $\varepsilon _2$ satisfies the bound in Theorem~\ref{theo:PDG_conv}.
\begin{align*}
\varepsilon _2  &= 4\mu ^3 L\sigma _{\min }^2 (B)(\mu  + L)^{ - 1} \big(4L^2 \sigma _{\max }^4 (B) + \mu ^2 \sigma _{\min }^4 (B) + 2L\mu \sigma _{\min }^2 (B)\sigma _{\max }^2 (B)\big)^{ - 1}  \\ 
  &\le \frac{{4\mu ^3 L}}{{2\mu \big((4L^2  + 2L\mu )\sigma _{\max }^2 (B) + \mu ^2 \sigma _{\min }^2 (B)\big)}} \\ 
  &\le \frac{2}{{\frac{{6\sigma _{\max }^2 (B)}}{\mu } + \frac{{\sigma _{\min }^2 (B)}}{L}}} \\ 
  &\le \frac{2}{{\frac{{\sigma _{\max }^2 (B)}}{\mu } + \frac{{\sigma _{\min }^2 (B)}}{L}}}.
\end{align*}
Then we show the upper bound of $c$. According to Corollary~\ref{coro:opt_stepsize}, when $\gamma {\rm{ = }}\frac{{\mu ^{\rm{2}} \sigma _{\min }^2 (B)}}{{2L\sigma _{\max }^3 (B)}}$, we have $c_1  = c_2  = 1 - \frac{{\varepsilon _2 \sigma _{\min }^2 (B)}}{{2L}}$. Therefore,
\begin{align*}
c &=1 - 2\mu ^3 \sigma _{\min }^4 (B) (\mu  + L)^{ - 1} \big(4L^2 \sigma _{\max }^4 (B) + \mu ^2 \sigma _{\min }^4 (B) + 2L\mu \sigma _{\min }^2 (B)\sigma _{\max }^2 (B)\big)^{ - 1} \\
  &\le 1 - \frac{{\mu ^3 }}{{L(4L^2 \tau ^2  + \mu ^2  + 2L\mu \tau )}} \\ 
  &\le 1 - \frac{{\mu ^3 }}{{L^3 (4\tau ^2 + 2\tau + 1 )}} \\ 
  &\le 1 - \frac{1}{{\kappa ^3 (4\tau ^2 + 2\tau + 1  )}}.
\end{align*}
\end{proof}

Corollary~\ref{coro:opt_rate} shows that the optimal convergence rate of the original system is related to the condition number $\kappa$ and $\tau$. A large $\kappa$ implies a slow convergence rate since the iterations oscillate back and forth\cite{ortega2000iterative}. 
Thus, we utilize augmented Lagrangian (AL) and hat-x methods to change the convergence rate by changing the value of $\kappa$ of the objective. Applying those methods to solve problem~\eqref{eq:dual} obtained via reverse-engineering results in a redesigned system formula with improved performance.

\subsubsection{Redesign via an Augmented Lagrangian Method}
Adding the square of the equality constraints as penalty terms can change the condition number of the original problem while the optimal solution stays unchanged. This method is known as the AL method. 
The corresponding AL function for problem~\eqref{eq:primal} is
\begin{align}\label{eq:AL}
\mathcal{L}_a  = f(x) + \lambda ^T (Bx - b) + \frac{\alpha }{2}\left\| {Bx - b} \right\|^2
\end{align}
where $\alpha >0$ is a scalar. Equation~\eqref{eq:AL} is the Lagrangian for
\begin{align}\label{eq:AL_problem}
    \mathop {\min }\limits_{x \in \mathbb{R}^n} f(x) + \frac{\alpha }{2}\left\| {Bx - b} \right\|^2 \quad
 \text{s.t.} {\kern 1pt} {\kern 1pt} {\kern 1pt} {\kern 1pt} {\kern 1pt} {\kern 1pt} Bx = b
\end{align}
which has the same minimum and optimal solution as the original problem~\eqref{eq:primal}~\cite{bertsekas1997nonlinear}.

For any system~\eqref{eq:LTI-sys} in Class-$\mathcal{S}$, applying the AL method to solve the problem obtained via reverse-engineering leads to a redesigned system with improved performance: 
\begin{align*} 
x_{k + 1}  = x_k  - \varepsilon _1 \left( {\nabla f(x_k ) + B^T \lambda _k } \right)\underbrace { - \varepsilon _1 \alpha B^T (Bx_k  - b)}_{\Delta u_k }   
\end{align*}
where $\Delta u_k={ - \varepsilon _1 \alpha B^T (Bx_k  - b)}$. Note that the extra part is added to primal variables only. 

Let $g(x)=f(x) + \frac{\alpha }{2}\left\| {Bx - b} \right\|^2$ and we compare the condition numbers of $f$ and $g$. 
The Hessian matrix of $g(x)$ is $\bf{H}$$_g ={\bf{H}}_f + \alpha B^TB$ and $\lambda_{\min}({\bf{H}}_g)I_n \preceq {\bf{H}}_g \preceq \lambda_{\max}({\bf{H}}_g)I_n$. The condition number of $g(x)$, denoted by $\kappa_g$, is $\frac{\lambda_{\max}({\bf{H}}_g)}{\lambda_{\min}({\bf{H}}_g)}$. Denote by $\kappa_0$ the condition number of $f$ and $\kappa_0=L/\mu$. For a specific problem, we are able to obtain the numerical form of matrix ${\bf{H}}_g$. In this case, we can calculate $\kappa_g$ and $\kappa_0$ precisely. If $\kappa_g < \kappa_0$, applying the AL method is able to improve the convergence rate; otherwise, $\alpha$ should be set to zero to avoid the influence of penalty terms. Note that matrix $B$ has a significant influence on the effectiveness of this method. The matrix product $B^TB$ will not change the topological structure of the original system as the redesigned system can be rearranged as $x_{k + 1}  = x_k  - \varepsilon _1  \nabla f(x_k ) -\varepsilon _1B^T \big(\lambda _k  +\alpha (Bx_k  - b)\big)$ where $Bx_k  - b$ can be obtained from the original update of $\lambda$.

Another benefit of applying the AL method is the convexification when the objective $f$ obtained via reverse-engineering is not strongly convex in $\mathbb{R}^n$.

\begin{theo}~\label{pro:convexification}
Assume that ${\bf{H}}_f$ is positive definite on the nullspace of $B^TB$, i.e., $y^T {\bf{H}}_f y >0$ for all $y \neq 0$ with $y^TB^TBy=0$. Then there exists a scalar $\bar{\alpha}$ such that for $\alpha>\bar{\alpha}$, we have (See Theorem 4.2 in~\cite{anstreicher2000note} for the proof)
\begin{align*}
{\bf{H}}_f + \alpha B^TB \succ 0.
\end{align*}
\end{theo}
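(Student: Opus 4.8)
The plan is to exploit the orthogonal splitting $\mathbb{R}^n=\ker B\oplus(\ker B)^{\perp}$ and to reduce the claim to the positive definiteness of a scalar $2\times 2$ quadratic form whose coefficients depend on $\alpha$. First I would note that $\ker(B^TB)=\ker B$, since $y^TB^TBy=\|By\|^2$, so the hypothesis says precisely that $y^T{\bf{H}}_fy>0$ for every nonzero $y\in\ker B$. Writing any $x\in\mathbb{R}^n$ uniquely as $x=u+v$ with $u\in\ker B$ and $v\in(\ker B)^{\perp}=\mathrm{range}(B^T)$, we have $Bx=Bv$ and
\[
x^T({\bf{H}}_f+\alpha B^TB)x=u^T{\bf{H}}_fu+2u^T{\bf{H}}_fv+v^T{\bf{H}}_fv+\alpha\|Bv\|^2 .
\]

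Next I would introduce three constants that control the relevant extremes. Let $\mu_0:=\min\{\,u^T{\bf{H}}_fu:\ u\in\ker B,\ \|u\|=1\,\}$; this minimum is attained because the unit sphere of $\ker B$ is compact, and it is strictly positive by hypothesis, so $u^T{\bf{H}}_fu\ge\mu_0\|u\|^2$ on $\ker B$. Let $\nu>0$ be the smallest nonzero eigenvalue of $B^TB$; since every $v\in(\ker B)^{\perp}$ lies in the span of the eigenvectors of $B^TB$ with positive eigenvalue, $\|Bv\|^2\ge\nu\|v\|^2$. Finally let $M:=\|{\bf{H}}_f\|$, so that $|2u^T{\bf{H}}_fv|\le 2M\|u\|\,\|v\|$ and $v^T{\bf{H}}_fv\ge-M\|v\|^2$. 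Combining these and writing $s:=\|u\|$, $t:=\|v\|$ (so that $\|x\|^2=s^2+t^2$ and $x\neq 0$ iff $(s,t)\neq(0,0)$), the displayed identity gives
\[
x^T({\bf{H}}_f+\alpha B^TB)x\ \ge\ \mu_0 s^2-2Mst+(\alpha\nu-M)t^2\ =:\ Q_\alpha(s,t).
\]

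It then remains to pick $\bar\alpha$ so that the symmetric coefficient matrix of $Q_\alpha$ (diagonal entries $\mu_0$ and $\alpha\nu-M$, off-diagonal entry $-M$) is positive definite. Its leading entry $\mu_0$ is already positive, so by Sylvester's criterion this holds iff $\mu_0(\alpha\nu-M)-M^2>0$, i.e. iff $\alpha>\bar\alpha:=\frac{M}{\nu}\bigl(1+\frac{M}{\mu_0}\bigr)$. For any such $\alpha$ we get $Q_\alpha(s,t)>0$ for all $(s,t)\neq(0,0)$, hence $x^T({\bf{H}}_f+\alpha B^TB)x>0$ for all $x\neq0$, which is the assertion. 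A shorter but non-constructive variant argues by contradiction: if no finite $\bar\alpha$ worked, then for each $k$ there is $x_k$ with $\|x_k\|=1$ and $x_k^T{\bf{H}}_fx_k+k\|Bx_k\|^2\le0$; a convergent subsequence $x_k\to x^{\star}$ then forces $Bx^{\star}=0$ and $(x^{\star})^T{\bf{H}}_fx^{\star}\le0$ with $\|x^{\star}\|=1$, contradicting the hypothesis, after which monotonicity of $\alpha\mapsto{\bf{H}}_f+\alpha B^TB$ in the Loewner order promotes ``$\succ0$ for some large $\alpha$'' to ``$\succ0$ for all $\alpha>\bar\alpha$''.

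I expect the main obstacle to be controlling the indefiniteness of ${\bf{H}}_f$: both the cross term $2u^T{\bf{H}}_fv$ and the term $v^T{\bf{H}}_fv$ may be negative (of order $\|u\|\,\|v\|$ and $\|v\|^2$ respectively), so the positive contribution $\alpha\|Bv\|^2$ must dominate them uniformly over all directions, not just along the $v$ component. The argument closes precisely because it rests on two genuinely strictly positive ``anchors'' -- the compactness bound $\mu_0>0$ on $\ker B$ (where $B^TB$ contributes nothing, which is exactly why the hypothesis is needed there) and the coercivity bound $\|Bv\|^2\ge\nu\|v\|^2$ on $(\ker B)^{\perp}$ -- and the only real care needed is to verify that these two bounds together with the threshold $\bar\alpha$ coming from the discriminant of $Q_\alpha$ suffice; everything else is routine.
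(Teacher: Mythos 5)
Your proof is correct. Note, however, that the paper does not actually prove this statement: it simply cites Theorem 4.2 of Anstreicher and Wright (the classical ``augmented Hessian'' result, essentially Debreu's lemma / Finsler's lemma), so there is no in-paper argument to compare against. Your argument is the standard constructive proof of that result: the identification $\ker(B^TB)=\ker B$, the orthogonal splitting $x=u+v$ with $u\in\ker B$, $v\in\mathrm{range}(B^T)$, the three anchors $\mu_0>0$ (compactness of the unit sphere of $\ker B$ plus the hypothesis), $\nu>0$ (smallest nonzero eigenvalue of $B^TB$, giving coercivity of $\|Bv\|^2$ on $(\ker B)^\perp$), and $M=\|{\bf H}_f\|$, followed by Sylvester's criterion on the resulting $2\times 2$ form. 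The details check out: the cross-term bound $|2u^T{\bf H}_fv|\le 2M\|u\|\|v\|$ is valid, the discriminant condition $\mu_0(\alpha\nu-M)>M^2$ gives exactly your threshold $\bar\alpha=\tfrac{M}{\nu}\bigl(1+\tfrac{M}{\mu_0}\bigr)$, and positive definiteness of $Q_\alpha$ on $\mathbb{R}^2$ covers the first quadrant where $(s,t)=(\|u\|,\|v\|)$ lives. What your route buys over the citation is an explicit, computable $\bar\alpha$ in terms of $\mu_0$, $\nu$, $M$, which is practically relevant here since the paper needs to choose $\alpha$ when implementing the AL redesign; your second, compactness-plus-contradiction variant recovers the non-constructive textbook proof and correctly uses Loewner monotonicity to pass from ``some large $\alpha$'' to ``all $\alpha>\bar\alpha$.'' The only loose ends are degenerate cases ($\ker B=\{0\}$, where $\mu_0$ is a minimum over the empty set and the condition reduces to $\alpha\nu>M$; or $B=\mathbf{0}$, where $\nu$ is undefined but the claim is trivial); a one-line remark would close these.
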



\subsubsection{Redesign via a Hat-x method}
Another way is to introduce a free variable $\hat x \in \mathbb{R}^n$ to prevent a dramatic change of $x$. This leads to
\begin{align} \label{eq:hat}
 \mathop {\min }\limits_{x, \hat{x} \in \mathbb{R}^n} f(x) + \frac{\alpha }{2}\left\| {x - \hat x} \right\|^2 \quad
 \text{s.t.} \quad Bx = b
\end{align}
where $\alpha$ is a scalar. This problem is equivalent to
\begin{align*}
  \mathop {\min }\limits_{z \in \mathbb{R}^{2n}} h(z) \qquad  \text{s.t.} \quad\bar Bz = b 
\end{align*}
where $z = \left[ {\begin{array}{*{20}c}
   x  \\
   {\hat x}  \\
\end{array}} \right]$, $h(z) = f(x) + \frac{\alpha }{2}z^T \left[ {\begin{array}{*{20}c}
   I_n & { - I_n}  \\
   { - I_n} & I_n \\
\end{array}} \right]z$, $\bar B = \left[ {\begin{array}{*{20}c}
   B & \mathbf{0}  \\
\end{array}} \right] \in \mathbb{R}^{m \times 2n} $. Suppose Assumption 1 holds, then $\bar B$ is also full row rank and the maximal and minimal singular values of $\bar B$ are the same as those of $B$. The Lagrangian is
\begin{align*}
\mathcal{L}_h  = f(x) + \lambda ^T (Bx - b) + \frac{\alpha }{2}\left\| {x - \hat x} \right\|^2.
\end{align*}
Applying an optimality condition, we have
\begin{subequations} \label{eq:L_h_opt}
\begin{align}
\nabla_x {\mathcal{L}_h} &= \nabla f({x^*}) + {B^T}{\lambda ^*} + \alpha (x^*- \hat x^* )= 0 \label{eq:optimality_hatx}\\
\nabla_{\hat x} {\mathcal{L}_h} &= \alpha (\hat x^* - x^*) = 0 \label{eq:optimality_hatx_2}\\
\nabla_{\lambda} {\mathcal{L}_h} &= Bx^*-b = 0.
\end{align}
\end{subequations} 
Then $\hat x^* = x^*$. 
Therefore, $x^*,\lambda^*$ in the optimal solution $(x^*,\hat x^*,\lambda^*)$ of~\eqref{eq:hat} are the same as that of~\eqref{eq:primal}. 

For any system~\eqref{eq:LTI-sys} in Class-$\mathcal{S}$, apply the hat-x method to solve the optimization problem obtained via reverse-engineering to improve system performance. The iterations of the primal variables are
\begin{subequations}
\begin{align*}
x_{k + 1}  &= x_k  - \varepsilon _1 \left( {\nabla f(x_k ) + A^T \lambda _k } \right)\underbrace { - \varepsilon _1 \alpha (x_k  - \hat x_k )}_{\Delta u_k } \\
 \hat x_{k + 1}  &= \hat x_k  + \varepsilon _1 \alpha (x_k  - \hat x_k )   
\end{align*}
\end{subequations}
where $\Delta u_k={ - \varepsilon _1 \alpha (x_k  - \hat x_k )}$. Note that the extra part is added to the primal variables only. 

Denote by $\kappa_h$ and ${\bf{H}}_h$ the condition number and the Hessian matrix of $h(z)$. Then 
$\kappa_h = \frac{\lambda_{\max}({\bf{H}}_h)}{\lambda_{\min}({\bf{H}}_h)}$. Next, we compare $\kappa_0$ and $\kappa_h$. To do this, we first obtain the range of $\kappa_h$, and then compare the lower and upper bounds of $\kappa_h$ with $\kappa_0$. 

\begin{lem} \label{lem:preceq}
For symmetric matrices $A, B \in \mathbb{R}^{n \times n}$, if $A \preceq B$, then $\lambda_{\max}(A) \le \lambda_{\max}(B)$ and $\lambda_{\min}(A) \le \lambda_{\min}(B)$ hold.
\end{lem}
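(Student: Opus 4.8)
The plan is to invoke the Rayleigh--Ritz (variational) characterization of the extreme eigenvalues of a symmetric matrix and transfer the pointwise inequality $x^T A x \le x^T B x$ to the corresponding extremal values. Recall that for any symmetric $M \in \mathbb{R}^{n \times n}$ one has $\lambda_{\max}(M) = \max_{\|x\| = 1} x^T M x$ and $\lambda_{\min}(M) = \min_{\|x\| = 1} x^T M x$, which is immediate from diagonalizing $M$ in an orthonormal eigenbasis. First I would unpack the hypothesis: $A \preceq B$ means, by the convention fixed in the Notations, that $B - A \succeq 0$, i.e. $x^T(B-A)x \ge 0$, equivalently $x^T A x \le x^T B x$, for every $x \in \mathbb{R}^n$.

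For the maximal eigenvalue, I would take a unit vector $x_A$ attaining $\lambda_{\max}(A) = x_A^T A x_A$ and chain $\lambda_{\max}(A) = x_A^T A x_A \le x_A^T B x_A \le \max_{\|x\|=1} x^T B x = \lambda_{\max}(B)$, which is the first claim. For the minimal eigenvalue, I would instead take a unit vector $x_B$ attaining $\lambda_{\min}(B) = x_B^T B x_B$ and write $\lambda_{\min}(A) = \min_{\|x\|=1} x^T A x \le x_B^T A x_B \le x_B^T B x_B = \lambda_{\min}(B)$, giving the second claim. Equivalently, both assertions follow in one line from Weyl's monotonicity theorem ($\lambda_i(A) \le \lambda_i(B)$ for all $i$ whenever $A \preceq B$) specialized to $i = n$ and $i = 1$; I would likely just cite the Rayleigh-quotient argument to keep the lemma self-contained.

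Since the argument is entirely elementary, there is no substantive obstacle; the only point requiring a moment's care is the asymmetry in the choice of test vector. For $\lambda_{\max}$ one selects the maximizer for $A$ and bounds above by the maximum of the quadratic form of $B$, whereas for $\lambda_{\min}$ one must select the minimizer for $B$ and bound the minimum of the quadratic form of $A$ from above by it, so that the chain of inequalities runs in the correct direction. Writing both cases explicitly, rather than appealing to a symmetry between them, avoids this pitfall.
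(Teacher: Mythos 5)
Your proof is correct and is essentially identical to the paper's own argument: both use the Rayleigh--Ritz characterization, bound $\lambda_{\max}(A)$ via the maximizer of $x^TAx$ over the unit sphere, and bound $\lambda_{\min}(B)$ via the minimizer of $x^TBx$, with the same chains of inequalities. No gaps.
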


\begin{proof}
For any $x \in \mathbb{R}^{n}$, we have $x^TAx \le x^TBx$. Assume $x^*=\mathop {\text{arg max}}\limits_{\|x\|=1} x^TAx$. Then $\lambda_{\max}(A)={x^*}^TAx^* \le {x^*}^TBx^* \le \mathop {\max}\limits_{\|x\|=1} x^TBx=\lambda_{\max}(B)$. On the other hand, assume $\bar x=\mathop {\text{arg min}}\limits_{\|x\|=1} x^TBx$. Then $\lambda_{\min}(B)= \bar x^TB \bar x \ge {\bar x}^TA\bar x \ge \mathop {\min}\limits_{\|x\|=1}  x^TAx = \lambda_{\min}(A)$.
\end{proof}

\begin{theo}
Assume the objective obtained via reverse-engineering in~\eqref{eq:primal} $f \in \mathscr{S}_{\mu,L}$, then $\underline{\kappa_h} \le  \kappa_h  \le \overline{\kappa_h}$, where $\underline{\kappa_h} = \frac{2\alpha+\mu+\sqrt{\mu^2+4\alpha^2}}{2\alpha+L-\sqrt{L^2+4\alpha^2}}$ and $\overline{\kappa_h}  = \frac{2\alpha+L+\sqrt{L^2+4\alpha^2}}{2\alpha+\mu-\sqrt{\mu^2+4\alpha^2}}$.
\end{theo}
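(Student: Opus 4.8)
The idea is to sandwich the Hessian $\mathbf{H}_h$ between two explicit $2n\times 2n$ matrices whose spectra can be computed by hand, and then apply Lemma~\ref{lem:preceq}. First I would write $\mathbf{H}_h$ down: with $z=[x^T,\hat{x}^T]^T$ and $h(z)=f(x)+\frac{\alpha}{2}z^T\!\left[\begin{smallmatrix}I_n&-I_n\\-I_n&I_n\end{smallmatrix}\right]z$, differentiating twice gives the symmetric block matrix $\mathbf{H}_h=\left[\begin{smallmatrix}\mathbf{H}_f+\alpha I_n&-\alpha I_n\\-\alpha I_n&\alpha I_n\end{smallmatrix}\right]$, so $\kappa_h=\lambda_{\max}(\mathbf{H}_h)/\lambda_{\min}(\mathbf{H}_h)$ once positive definiteness is verified.

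Next, since $f\in\mathscr{S}_{\mu,L}$, the Remark after Definition~\ref{defi:f(x)} gives $\mu I_n\preceq\mathbf{H}_f\preceq LI_n$. Replacing the $(1,1)$-block $\mathbf{H}_f+\alpha I_n$ of $\mathbf{H}_h$ by $(\mu+\alpha)I_n$ and by $(L+\alpha)I_n$ yields matrices $\underline{M}$ and $\overline{M}$ with $\underline{M}\preceq\mathbf{H}_h\preceq\overline{M}$, because $\mathbf{H}_h-\underline{M}$ and $\overline{M}-\mathbf{H}_h$ are block-diagonal with one zero block and the positive semidefinite blocks $\mathbf{H}_f-\mu I_n$, $LI_n-\mathbf{H}_f$ respectively. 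Now $\underline{M}=\left[\begin{smallmatrix}\mu+\alpha&-\alpha\\-\alpha&\alpha\end{smallmatrix}\right]\otimes I_n$ and $\overline{M}=\left[\begin{smallmatrix}L+\alpha&-\alpha\\-\alpha&\alpha\end{smallmatrix}\right]\otimes I_n$, so the spectrum of each equals that of its $2\times 2$ factor, each eigenvalue repeated $n$ times. The characteristic equation of $\left[\begin{smallmatrix}s+\alpha&-\alpha\\-\alpha&\alpha\end{smallmatrix}\right]$ is $\theta^2-(s+2\alpha)\theta+s\alpha=0$, with roots $\frac{1}{2}\big(s+2\alpha\pm\sqrt{s^2+4\alpha^2}\big)$; since $(s+2\alpha)^2-(s^2+4\alpha^2)=4s\alpha>0$ for $s\in\{\mu,L\}$, both roots are positive, so $\underline{M}\succ 0$, hence $\mathbf{H}_h\succ 0$ and $\kappa_h$ is well defined. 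This gives $\lambda_{\min}(\underline{M})=\frac{1}{2}(2\alpha+\mu-\sqrt{\mu^2+4\alpha^2})$, $\lambda_{\max}(\underline{M})=\frac{1}{2}(2\alpha+\mu+\sqrt{\mu^2+4\alpha^2})$, $\lambda_{\min}(\overline{M})=\frac{1}{2}(2\alpha+L-\sqrt{L^2+4\alpha^2})$, $\lambda_{\max}(\overline{M})=\frac{1}{2}(2\alpha+L+\sqrt{L^2+4\alpha^2})$.

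Finally I would invoke Lemma~\ref{lem:preceq}: from $\underline{M}\preceq\mathbf{H}_h\preceq\overline{M}$ it yields $\lambda_{\max}(\underline{M})\le\lambda_{\max}(\mathbf{H}_h)\le\lambda_{\max}(\overline{M})$ and $\lambda_{\min}(\underline{M})\le\lambda_{\min}(\mathbf{H}_h)\le\lambda_{\min}(\overline{M})$, all strictly positive, whence $\kappa_h=\frac{\lambda_{\max}(\mathbf{H}_h)}{\lambda_{\min}(\mathbf{H}_h)}\le\frac{\lambda_{\max}(\overline{M})}{\lambda_{\min}(\underline{M})}=\overline{\kappa_h}$ and $\kappa_h\ge\frac{\lambda_{\max}(\underline{M})}{\lambda_{\min}(\overline{M})}=\underline{\kappa_h}$, which is the claim. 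The algebra (characteristic polynomials, the discriminant $s^2+4\alpha^2$, the positivity checks) is routine; the only delicate point is this last step, where the upper bound must pair the largest possible numerator $\lambda_{\max}(\overline{M})$ with the smallest possible denominator $\lambda_{\min}(\underline{M})$ — legitimate only because all four eigenvalues have already been shown to be positive — and symmetrically for the lower bound. It is also worth noting that one genuinely obtains a two-sided estimate rather than an equality, since $\mathscr{S}_{\mu,L}$ only forces $\lambda_{\min}(\mathbf{H}_f)\ge\mu$ and $\lambda_{\max}(\mathbf{H}_f)\le L$ without these being attained.
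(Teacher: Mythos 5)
Your proposal is correct and follows essentially the same route as the paper: sandwich $\mathbf{H}_h$ between the two block matrices obtained by replacing $\mathbf{H}_f$ with $\mu I_n$ and $LI_n$, compute their eigenvalues $\tfrac{1}{2}\bigl(s+2\alpha\pm\sqrt{s^2+4\alpha^2}\bigr)$ for $s\in\{\mu,L\}$, and combine Lemma~\ref{lem:preceq} with the definition $\kappa_h=\lambda_{\max}(\mathbf{H}_h)/\lambda_{\min}(\mathbf{H}_h)$ by pairing extreme numerators with extreme denominators. Your explicit positivity check via the discriminant (which implicitly uses $\alpha>0$) is a small but welcome addition that the paper leaves unstated.
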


\begin{proof}
Since $f \in \mathscr{S}_{\mu,L}$, we have $\mu I_n \preceq \nabla^2 f(x) \preceq LI_n$ and the Hessian of $h(z)$ is 
$H = \left[ {\begin{array}{*{20}c}
   \nabla^2 f(x)+\alpha I_n & -\alpha I_n  \\
   -\alpha I_n & \alpha I_n  \\
\end{array}} \right]
$. Then $\underline{H} \preceq  H \preceq \overline{H}$, where $\underline{H}=\left[ {\begin{array}{*{20}c}
   (\mu+\alpha) I_n & -\alpha I_n  \\
   -\alpha I_n & \alpha I_n  \\
\end{array}} \right]$ and $\overline{H}=\left[ {\begin{array}{*{20}c}
   (L+\alpha) I_n & -\alpha I_n  \\
   -\alpha I_n & \alpha I_n  \\
\end{array}} \right]$. Their eigenvalues are $\lambda (\underline{H})=\alpha+\frac{\mu}{2} \pm \frac{1}{2}\sqrt{\mu^2+4\alpha^2}$ and $\lambda(\overline{H})=\alpha+\frac{L}{2} \pm \frac{1}{2}\sqrt{L^2+4\alpha^2}$. By applying Lemma~\ref{lem:preceq}, we have $\lambda_{\min} (\underline{H}) \le \lambda_{\min}(H) \le \lambda_{\min} (\overline{H})$ and $\lambda_{\max} (\underline{H}) \le \lambda_{\max}(H) \le \lambda_{\max} (\overline{H})$. According to the definition of $\kappa_h$, we obtain the range of $\kappa_h$.
\end{proof}

Now we can compare the range of $\kappa_h$ with $\kappa_0$. Let $M=\underline{\kappa_h}-\kappa_0$, $w=\frac{\mu}{L}$ and $v=\frac{\alpha}{L}$ with $0<w \le 1$ and $v>0$, then $M =\frac{2v+w+\sqrt{w^2+4v^2}}{2v+1-\sqrt{1+4v^2}}-\frac{1}{w}.$
When $w$ tends to $0$, $M<0$; otherwise, $M>0$. On the other hand, $\overline{\kappa_h}-\kappa_0=\frac{1}{4\alpha \mu}\big[(2\alpha+\sqrt{L^2+4\alpha^2})(2\alpha+\mu+\sqrt{\mu^2+4\alpha^2})+L(\mu+\sqrt{\mu^2+4\alpha^2}-2\alpha)\big]$.
Since $\mu+\sqrt{\mu^2+4\alpha^2}>2\alpha$, $\overline{\kappa_h} > \kappa_0$. To conclude, $\kappa_h$ can be smaller than $\kappa_0$, depending on specific problems. Moreover, the hat-x method increases the dimension of the original system and works like a low-pass filter. This method could slow down system dynamics and smooth the trajectories, as demonstrated in Section~\ref{se:app_PI}.
\subsection{Reconfiguration Steps}



To summarize, the redesign approach is consist of the following three steps:

\begin{enumerate}
    \item \textbf{Reverse-engineering:} For a given system~\eqref{eq:LTI-sys}, apply Theorem~\ref{theo:class-O} or~\ref{theo:class-S} to reverse-engineer it as a GD algorithm~\eqref{eq:GD} or a PDG algorithm~\eqref{eq:PDG} for solving an unconstrained convex optimization problem~\eqref{eq:uncons} or a saddle-point problem~\eqref{eq:dual}, i.e., system dynamics~\eqref{eq:LTI-sys} can be rewritten as the form~\eqref{eq:GD} or~\eqref{eq:PDG}.

    \item \textbf{Acceleration:} Apply an HB/AGD method or an AL/hat-x method to solve the optimization problem obtained in Step 1), which results in a redesigned system.

    \item \textbf{Implementation:} Rearrange the redesigned system and compare it with the original system~\eqref{eq:GD}/\eqref{eq:PDG} or~\eqref{eq:LTI-sys} to obtain the implementation of the extra dynamics.
\end{enumerate}

This approach can be implemented by any dynamical system belonging to Class-$\mathcal{O}$ or Class-$\mathcal{S}$ to systematically improve its performance. With the extra dynamics added to the original controllers, the convergence rates/speeds and transient behaviors will be improved while the original control structures remain. Also, this approach is able to handle network systems as shown in Section~\ref{se:simu}, whose scale can be large.


\section{Examples Revisited}\label{se:simu}

\subsection{Internet Congestion Control}
Following the redesign procedure, applying Algorithm~\ref{alg:HB} and Algorithm~\ref{alg:AGD} to solve problem~\eqref{eq:opt_net}, we obtain the same form of redesigned dynamics but with different $\Delta u(k)$ given by
\begin{align*}
x(k+1)&=x(k) + \varepsilon \text{diag}\{k_{x_i}\} \left( U^{\prime} (x(k) ) - R^Tf(Rx(k)) \right) +\Delta u(k).
\end{align*}
For HB-based redesign, $\Delta u(k) = \beta \big ( x(k)-x(k-1)\big)$; while for AGD-based redesign, $\Delta u(k) = \beta_k \Big\{x (k) -x (k-1) + \varepsilon \text{diag}\{k_{x_i}\} \big[U ^\prime  (x (k)) - q(k)-U ^\prime (x (k-1)) + q(k-1) \big] \Big\}$, where $q$ is the corresponding vector form of $q_i$ and $q=R^Tf(Rx)$.


Consider a $2$-link network 
as shown in Fig.~\ref{fig:case}. Choose utility functions as $U_i(x_i)=\log x_i$, $k_{x_i}=0.1$, $\beta =0.54$ and $\beta_k =0.6$. Choose the penalty function as in~\cite{kelly1998rate}: $f_l (y_l ) = \frac{{(y_l  - c_l  + \sigma)^ +  }}{{\sigma ^2 }}$, where $\sigma $ is a small positive number and $(y_l  - c_l  + \sigma )^ + =\max\{y_l  - c_l  + \sigma,0\}$. Initially, the capacities of links A and B are $2, 4$ respectively and they change to $3, 1$ after some time to simulate sudden change of the link capacity. As shown in Fig.~\ref{fig:ICC}, redesigned dynamics via HB and AGD methods perform better than the original dynamics. 

\begin{figure}[!t]
\begin{center}
\includegraphics[width=10cm]{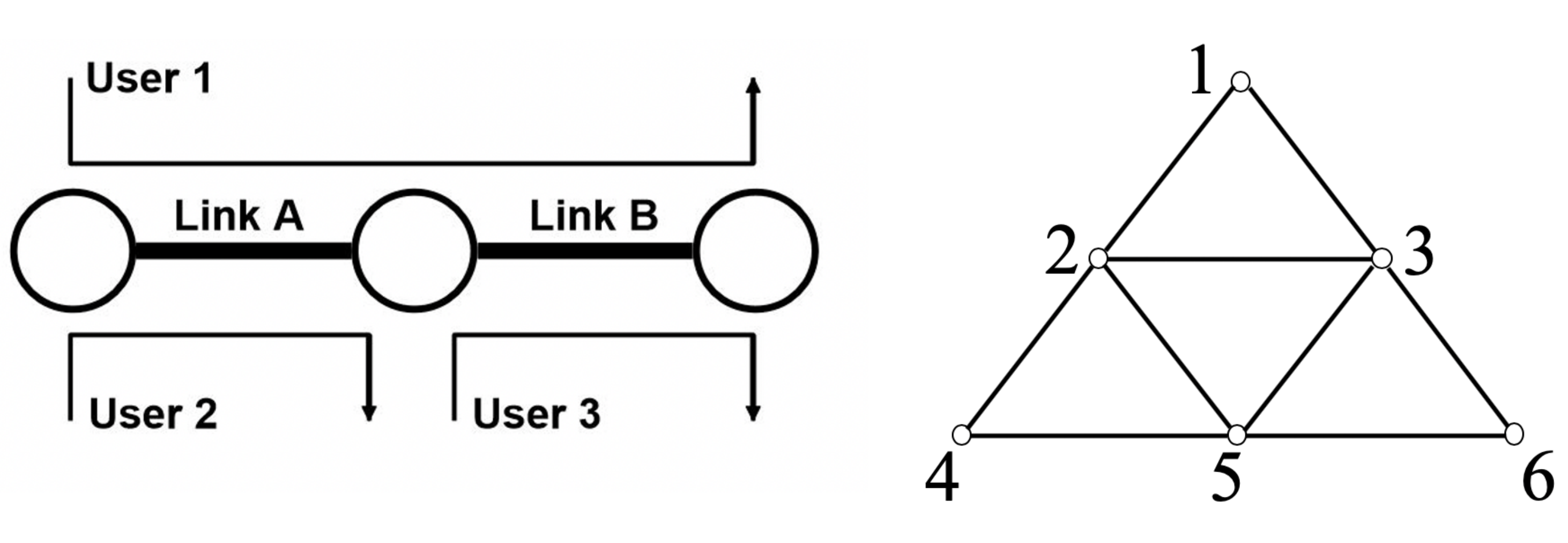}   
\caption{Left: A $2$-link network shared by $3$ users. Right: A regular network of $6$ agents.} 
\label{fig:case}
\end{center}
\end{figure}

\begin{figure*}[!t]
\centering
\begin{subfigure}[!t]{0.7\textwidth}
\centering
\includegraphics[width=1.1\textwidth]{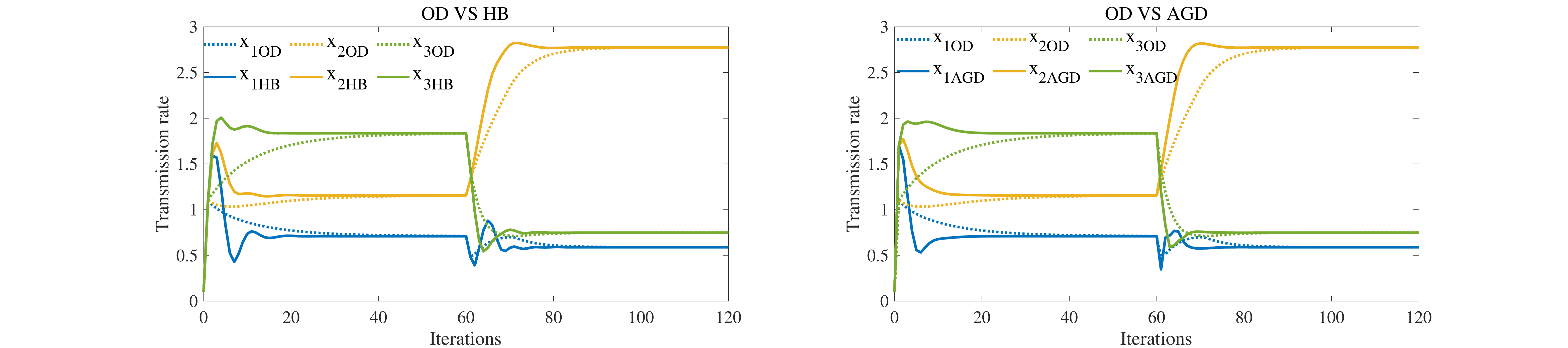}
\end{subfigure}
\begin{subfigure}[!t]{0.29\textwidth}
\centering
\includegraphics[width=1.1\textwidth]{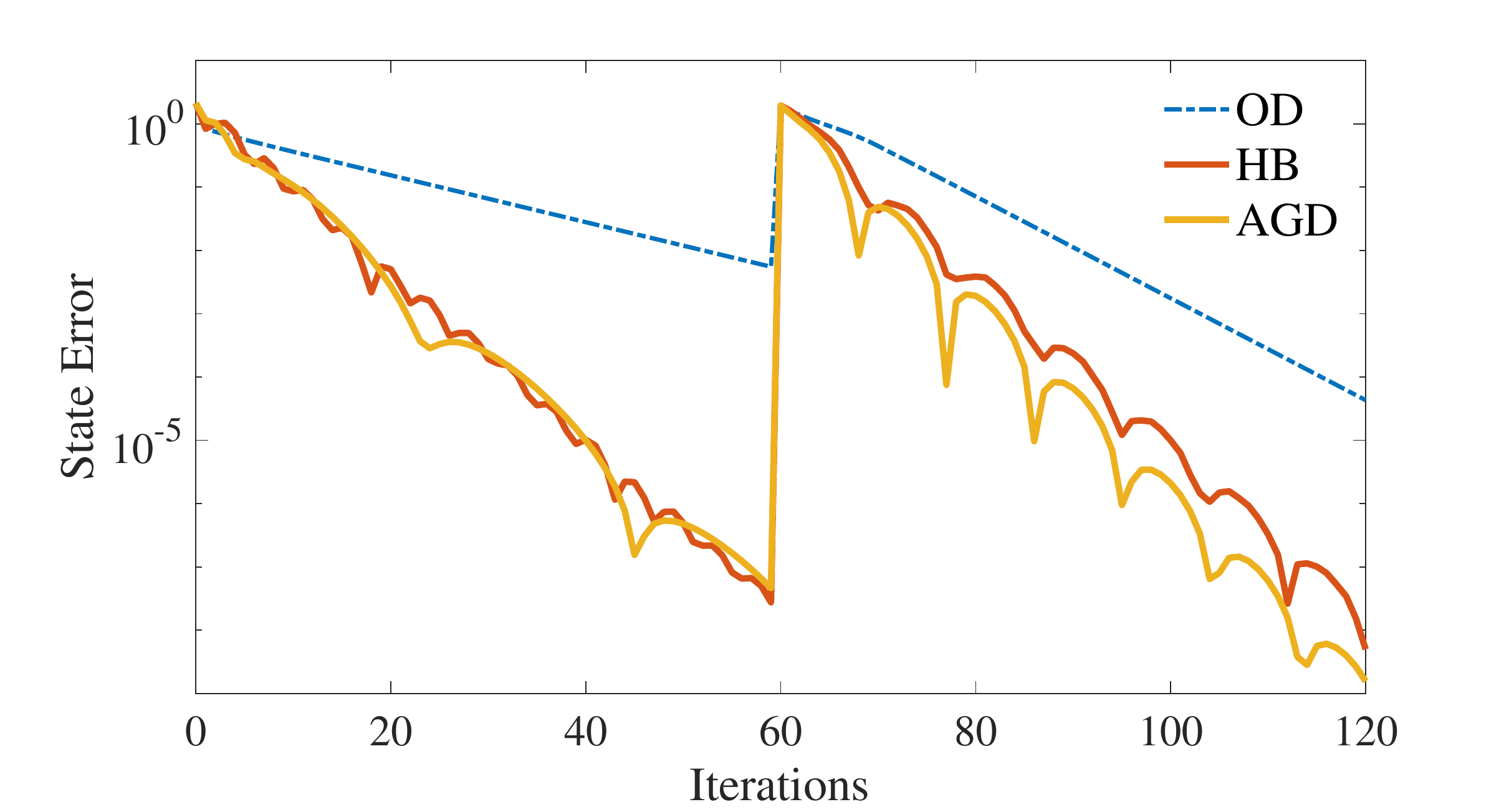}
\end{subfigure}
\caption{Internet congestion control. Left \& Middle: Simulation results (``OD'' stands for original dynamics; ``HB" and ``AGD" stand for redesigned dynamics via HB and AGD). Right: State error measured by $\left\| {{\rm{ }} x-x^* {\rm{ }}} \right\|_2$.}
\label{fig:ICC}
\end{figure*}

\begin{figure*}[!t]
\centering
\begin{subfigure}[!t]{0.7\textwidth}
\centering
\includegraphics[width=1.05\textwidth]{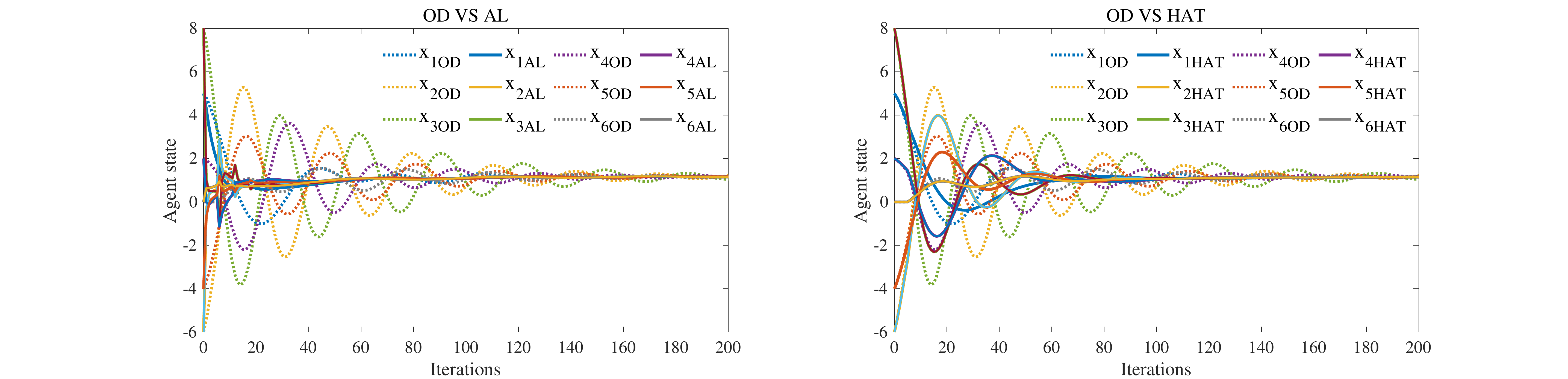}
\end{subfigure}
\begin{subfigure}[!t]{0.29\textwidth}
\centering
\includegraphics[width=1.05\textwidth]{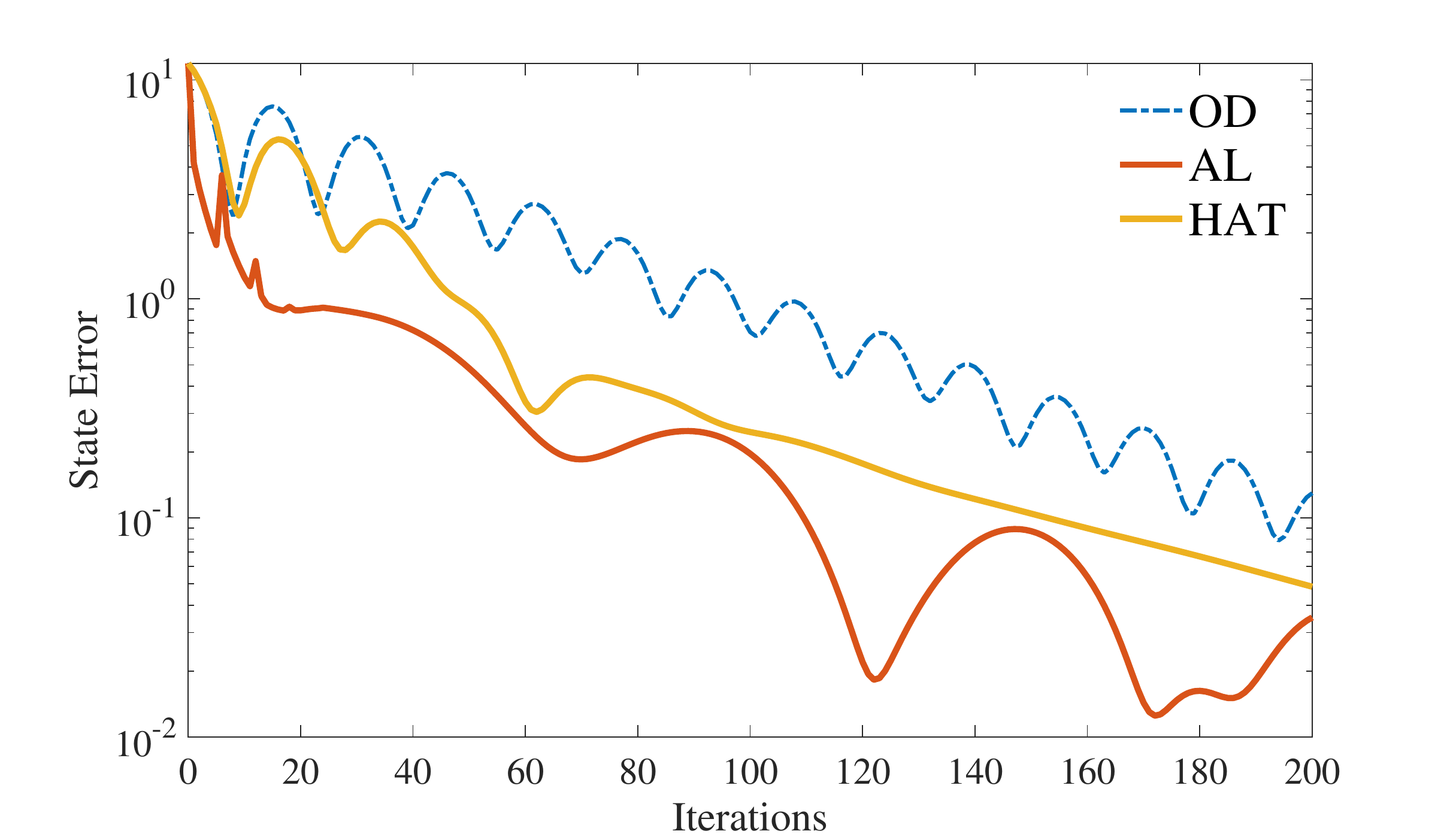}
\end{subfigure}
\caption{Distributed PI control. Left \& Middle: Simulation results (``OD'' stands for original dynamics; ``AL" and ``HAT" stand for redesigned dynamics via AL and hat-x). Right: State error measured by $\left\| {{\rm{ }} y-y^* {\rm{ }}} \right\|_2$.}
\label{fig:PI}
\end{figure*}

\subsection{Distributed PI Control for Single Integrator Dynamics}\label{se:app_PI}
Problem~\eqref{eq:PI_f} obtained via reverse-engineering is equal to 
\begin{align*}
\mathop {\min }\limits_{y \in \mathbb{R}^n } &f= \frac{\rho_2}{2}y^T Ly + \frac{\delta }{2}y^T y - y^T d - \delta y^T y(0) \\
&\text{s.t.} \quad \rho_1 \tilde{L}^T y = \textbf{0}. 
\end{align*}
Since ${\bf{H}}_f = \rho_2L+\delta I_n \succ 0$, $f \in \mathscr{S}_{L,u}$. Following the redesign procedure, applying augmented Lagrangian and hat-x method, we obtain the same form of redesigned dynamics but with different $\Delta u(k)$ given by
\begin{align*}
 y(k + 1) =& y(k) - \varepsilon _1 \big(\rho_2 Ly(k) + \delta y(k) - d - \delta y(0)  + \rho_1\tilde{L} z(k)\big)+ \Delta u(k) \\ 
 z(k+1) =& z(k)+ \varepsilon _{\rm{2}} D y(k).
\end{align*}
For the AL-based redesign, $\Delta u(k) =  - \varepsilon _1 \alpha \tilde{L} \tilde{L}^T y(k)$; while for the hat-x-based redesign, $\Delta u(k) =  - \varepsilon _1 \alpha \big(y(k) - \hat y(k)\big)$, where $\hat y(k + 1) = \hat y(k) + \varepsilon _1 \alpha (y(k) - \hat y(k) \big)$. 

Consider a network of $6$ agents with the topology shown in Fig.~\ref{fig:case}. In addition, communication delay is considered in this network since this could happen in reality. 
Let the constant disturbance $d=[0, 2, 0, 0, 0, 0]^T$, the initial condition $x(0)=[5,-6, 8, 2, -4, 0]^T$, the integral gain $\rho_1=10$, the static gain $\rho_2=0.5$, $\delta =1$. As shown in Fig.~\ref{fig:PI}, redesigned dynamics via the AL and hat-x methods perform better than the original dynamics, while the redesigned dynamics via hat-x is the most smooth. 

%
%


\section{Conclusion and Future work}\label{se:conclusion}
This paper has proposed a control reconfiguration approach to improve the performance of two classes of dynamical systems. This approach is to firstly reverse-engineer a given dynamical system as a gradient descent or a primal-dual gradient algorithm to solve a convex optimization problem. Then, by utilizing several acceleration techniques, the extra control term is obtained and added to the original control structure. Under this retrofit procedure, system performance is improved while the control structure remains, as demonstrated by both theoretical results and practical applications. 

In the future, we will investigate the implementation of the redesign in a continuous-time setting as well as the framework for nonlinear systems. Also, the case when ${\bf{H}}_{f(x^{(1)})} \preceq 0$ in the definition of Class-$\mathcal{S}$ will be studied. Last but not least, we will consider redesigning to improve other properties of systems, for example, robustness to delays.



\appendix[Proof of Theorem~\ref{theo:PDG_conv}]
This proof is inspired by~\cite{du2019linear}. 
According to~\cite{rockafellar1970convex,kakade2009duality}, if $f(x)\in \mathscr{S}_{L,u}$, then its conjugate function $f^{\star}(y)$ is $\frac{1}{L}$ strongly convex and has Lipschitz gradient $\frac{1}{\mu}$ in terms of $y$. Let $g(\lambda) = {\rm{ }}{f^{\star}}( - {B^T}\lambda ) + {\lambda ^T}b$, which is equivalent to~\eqref{eq:dual}. Proposition~\ref{prop:g_lambda} shows the strongly convex and Lipschitz continuous gradient parameters of $g(\lambda)$.

\begin{prop}\label{prop:g_lambda}
Function $g(\lambda)$ is $\frac{{\sigma _{\min }^2(B) }}{{L}}$strongly convex and has Lipschitz continuous gradient $\frac{{\sigma _{\max }^2(B) }}{\mu}$.
\end{prop}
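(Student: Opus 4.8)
The plan is to treat $g(\lambda) = f^{\star}(-B^T\lambda) + \lambda^T b$ as an affine reparametrization of the conjugate $f^{\star}$ and propagate the curvature bounds through the linear map $\lambda \mapsto -B^T\lambda$. By the conjugate correspondence recalled just above the proposition (see~\cite{rockafellar1970convex,kakade2009duality}), $f \in \mathscr{S}_{\mu,L}$ implies $f^{\star} \in \mathscr{S}_{1/L,\,1/\mu}$, i.e.\ $f^{\star}$ is $\frac{1}{L}$-strongly convex and has $\frac{1}{\mu}$-Lipschitz gradient. Since the term $\lambda^T b$ is affine, it affects neither property, so it suffices to analyze $\lambda \mapsto f^{\star}(-B^T\lambda)$ and to track how the two constants transform under composition with the linear map $-B^T$.

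First I would compute $\nabla g(\lambda) = -B\,\nabla f^{\star}(-B^T\lambda) + b$, so that for any $\lambda_1,\lambda_2$ the linear term cancels in the difference, $\nabla g(\lambda_1) - \nabla g(\lambda_2) = -B\big(\nabla f^{\star}(-B^T\lambda_1) - \nabla f^{\star}(-B^T\lambda_2)\big)$. For the Lipschitz estimate I would take norms, use $\|Bv\| \le \sigma_{\max}(B)\|v\|$, then the $\frac{1}{\mu}$-smoothness of $f^{\star}$, then $\|B^T v\| \le \sigma_{\max}(B)\|v\|$ once more, chaining to $\|\nabla g(\lambda_1) - \nabla g(\lambda_2)\| \le \frac{\sigma_{\max}^2(B)}{\mu}\|\lambda_1 - \lambda_2\|$. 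For the strong-convexity estimate I would pair the gradient difference with $\lambda_1 - \lambda_2$, move $B^T$ across the inner product, apply the $\frac{1}{L}$-strong-convexity inequality of $f^{\star}$ evaluated at the points $-B^T\lambda_1,\,-B^T\lambda_2$, and obtain $\langle \nabla g(\lambda_1) - \nabla g(\lambda_2),\, \lambda_1 - \lambda_2\rangle \ge \frac{1}{L}\|B^T(\lambda_1-\lambda_2)\|^2 \ge \frac{\sigma_{\min}^2(B)}{L}\|\lambda_1-\lambda_2\|^2$. Equivalently, in the twice-differentiable case covered by the Remark, one writes $\nabla^2 g(\lambda) = B\,\nabla^2 f^{\star}(-B^T\lambda)\,B^T$ and sandwiches it by congruence between $\frac{1}{L}BB^T$ and $\frac{1}{\mu}BB^T$, hence between $\frac{\sigma_{\min}^2(B)}{L}I_m$ and $\frac{\sigma_{\max}^2(B)}{\mu}I_m$.

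The only delicate point is the lower (strong-convexity) bound: the inequality $\|B^T v\|^2 \ge \sigma_{\min}^2(B)\|v\|^2$ (equivalently $BB^T \succeq \sigma_{\min}^2(B) I_m$) is false for a general $B$ and holds precisely because Assumption~\ref{ass:rank_B} forces $B$ to have full row rank, so that $\sigma_{\min}(B) > 0$ and $B^T$ is injective. This is the crux of the argument and the reason the full-rank hypothesis is invoked; the Lipschitz side needs no such assumption. I do not anticipate any genuine obstacle beyond keeping the signs straight in the inner-product manipulation and recording clearly where full row rank enters.
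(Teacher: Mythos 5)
Your proposal is correct and takes essentially the same route as the paper: the identical chaining of $\sigma_{\max}(B)$ with the $\tfrac{1}{\mu}$-smoothness of $f^{\star}$ for the Lipschitz bound, and the same crucial inequality $\|B^T v\|\ge \sigma_{\min}(B)\|v\|$ (valid because Assumption~\ref{ass:rank_B} gives $\sigma_{\min}(B)>0$) for the strong-convexity modulus. The only cosmetic difference is that the paper certifies strong convexity via the first-order function-value inequality $g(\lambda_2)\ge g(\lambda_1)+\langle\nabla g(\lambda_1),\lambda_2-\lambda_1\rangle+\tfrac{\sigma_{\min}^2(B)}{2L}\|\lambda_2-\lambda_1\|^2$, whereas you use the equivalent gradient-monotonicity characterization.
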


\noindent \textit{Proof.} We prove this by applying the definition of strongly convex and Lipschitz continuous gradient. Choose any $\lambda_1, \lambda_2 \in \mathbb{R}^{m}$, we have
\begin{align*}
  \left\| {\nabla g(\lambda _1 ) - \nabla g(\lambda _2 )} \right\| &\le \left\| { - B\nabla f^{\star} ( - B^T \lambda _1 ) + B\nabla f^{\star} ( - B^T \lambda _2 )} \right\| \\ 
  &\le \sigma _{\max }(B) \left\| {\nabla f^{\star} ( - B^T \lambda _1 ) - \nabla f^{\star} ( - B^T \lambda _2 )} \right\| \\ 
  &\le \frac{{\sigma _{\max }(B) }}{\mu }\left\| { - B^T \lambda _1  - ( - B^T \lambda _2 )} \right\|
\le \frac{{\sigma _{\max }^2(B) }}{\mu }\left\| {\lambda _1  - \lambda _2 } \right\|.
\end{align*}

Therefore, $g(\lambda)$ has Lipschitz continuous gradient $\frac{{\sigma _{\max }^2(B) }}{\mu}$. On the other hand, for any $\lambda_1, \lambda_2 \in \mathbb{R}^{m}$, we have
\begin{align*}
g(\lambda _2 ) - g(\lambda _1 ) &= f^{\star} ( - B^T \lambda _2 ) + \lambda _2^T b - f^{\star} ( - B^T \lambda _1 ) - \lambda _1^T b \\ 
  &\ge \left\langle { \nabla f^{\star} ( - B^T \lambda _1 ), - B^T \lambda _2  + B^T \lambda _1 } \right\rangle  + (\lambda _2^T  - \lambda _1^T )b+ \frac{1}{{2L}}\left\| { - B^T \lambda _2  + B^T \lambda _1 } \right\|^2 \\
  &\ge \left\langle {\nabla g(\lambda _1 ),\lambda _2  - \lambda _1 } \right\rangle  + \frac{{\sigma _{\min }^2(B) }}{{2L}}\left\| {\lambda _2  - \lambda _1 } \right\|^2.
 \end{align*}
 Thus, $g(\lambda)$ is $\frac{{\sigma _{\min }^2(B) }}{L}$-strongly convex.
\\\\
Next, we establish the decrease of error term $\left\| {\lambda _k  - \lambda ^* } \right\|$ and $\left\| {x_{k} - \nabla f^{\star} ( - B^T \lambda _{k} )} \right\|$.

\begin{prop} \label{prop:b_t+1}
If $0 < \varepsilon _2  \le \frac{2}{{\sigma _{\min }^2(B) /L + \sigma _{\max }^2(B) /\mu }}$, then
\begin{align}
\left\| {\lambda _{k + 1}  - \lambda ^* } \right\| &\le \left(1 - \frac{{ \varepsilon _2\sigma _{\min }^2(B) }}{L}\right)\left\| {\lambda _k  - \lambda ^* } \right\| \nonumber + \varepsilon _2 \sigma _{\max }(B) \left\| {x_k -\nabla f^{\star} ( - B^T \lambda _k )  } \right\|.  
\end{align}
\end{prop}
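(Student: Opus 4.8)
The plan is to read the $\lambda$-update~\eqref{alg:PDG_2} as an \emph{inexact} gradient-descent step on the dual function $g(\lambda)=f^{\star}(-B^{T}\lambda)+\lambda^{T}b$ and to split the resulting error into a clean gradient-descent contraction and a perturbation term. Since $\nabla g(\lambda)=-B\nabla f^{\star}(-B^{T}\lambda)+b$, the exact step $\lambda_{k}-\varepsilon_{2}\nabla g(\lambda_{k})=\lambda_{k}+\varepsilon_{2}\big(B\nabla f^{\star}(-B^{T}\lambda_{k})-b\big)$ differs from~\eqref{alg:PDG_2} only in that $x_{k}$ appears in place of $\nabla f^{\star}(-B^{T}\lambda_{k})$. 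Therefore
\begin{align*}
\lambda_{k+1}-\lambda^{*}=\underbrace{\big(\lambda_{k}-\varepsilon_{2}\nabla g(\lambda_{k})-\lambda^{*}\big)}_{\text{exact-step error}}+\underbrace{\varepsilon_{2}B\big(x_{k}-\nabla f^{\star}(-B^{T}\lambda_{k})\big)}_{\text{perturbation}},
\end{align*}
and the triangle inequality reduces the proposition to bounding these two terms.

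For the perturbation term, $\|\varepsilon_{2}B(x_{k}-\nabla f^{\star}(-B^{T}\lambda_{k}))\|\le\varepsilon_{2}\sigma_{\max}(B)\,\|x_{k}-\nabla f^{\star}(-B^{T}\lambda_{k})\|$ because $\|B\|=\sigma_{\max}(B)$; this is exactly the second summand in the claim. For the exact-step error I would first record that $\lambda^{*}$ is a minimizer of $g$: by the conjugacy identity $x^{*}=\nabla f^{\star}(-B^{T}\lambda^{*})$ (equivalent to the stationarity condition $\nabla f(x^{*})+B^{T}\lambda^{*}=0$) together with primal feasibility $Bx^{*}=b$, we get $\nabla g(\lambda^{*})=-Bx^{*}+b=\mathbf{0}$. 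By Proposition~\ref{prop:g_lambda}, $g$ is $m:=\sigma_{\min}^{2}(B)/L$-strongly convex with $L_{g}:=\sigma_{\max}^{2}(B)/\mu$-Lipschitz continuous gradient, and $\tfrac{2}{m+L_{g}}$ is precisely the stepsize bound assumed in the statement. Invoking the standard gradient-descent contraction for such functions --- the same inequality used to prove Theorem~\ref{theo:GD_stro} (Theorem 2.1.15 in~\cite{nesterov2018lectures}) --- gives, for $0<\varepsilon_{2}\le\tfrac{2}{m+L_{g}}$,
\begin{align*}
\big\|\lambda_{k}-\varepsilon_{2}\nabla g(\lambda_{k})-\lambda^{*}\big\|\le(1-\varepsilon_{2}m)\,\|\lambda_{k}-\lambda^{*}\|=\Big(1-\tfrac{\varepsilon_{2}\sigma_{\min}^{2}(B)}{L}\Big)\|\lambda_{k}-\lambda^{*}\|.
\end{align*}
Adding the two bounds yields the proposition.

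The argument is largely routine once the inexact-gradient viewpoint is fixed, so I do not expect a genuine obstacle; the one place to be careful is the bookkeeping that ties the pieces together --- identifying $Bx_{k}-b$ as the perturbed \emph{negative} gradient of $g$ with the correct sign, verifying $\nabla g(\lambda^{*})=\mathbf{0}$ through the conjugacy relation, and checking that the stepsize window $0<\varepsilon_{2}\le 2/\big(\sigma_{\min}^{2}(B)/L+\sigma_{\max}^{2}(B)/\mu\big)$ is exactly the one under which the contraction applies with constant $1-\varepsilon_{2}\sigma_{\min}^{2}(B)/L$. If one prefers a self-contained derivation of that contraction, it follows by expanding $\|\lambda_{k}-\varepsilon_{2}\nabla g(\lambda_{k})-\lambda^{*}\|^{2}$ and using the bound $\langle\nabla g(\lambda_{k})-\nabla g(\lambda^{*}),\lambda_{k}-\lambda^{*}\rangle\ge\tfrac{mL_{g}}{m+L_{g}}\|\lambda_{k}-\lambda^{*}\|^{2}+\tfrac{1}{m+L_{g}}\|\nabla g(\lambda_{k})-\nabla g(\lambda^{*})\|^{2}$ together with $\nabla g(\lambda^{*})=\mathbf{0}$, exactly mirroring the proof of Theorem~\ref{theo:GD_stro}.
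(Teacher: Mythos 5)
Your proposal is correct and follows essentially the same route as the paper: the paper likewise introduces the exact gradient step $\tilde\lambda_{k+1}=\lambda_k-\varepsilon_2\nabla g(\lambda_k)$, applies Theorem~\ref{theo:GD_stro} to $g$ via Proposition~\ref{prop:g_lambda} to get the factor $1-\varepsilon_2\sigma_{\min}^2(B)/L$, bounds $\|\tilde\lambda_{k+1}-\lambda_{k+1}\|=\varepsilon_2\|B(\nabla f^{\star}(-B^T\lambda_k)-x_k)\|$ by $\varepsilon_2\sigma_{\max}(B)\,a_k$, and concludes by the triangle inequality. Your explicit verification that $\nabla g(\lambda^*)=\mathbf{0}$ is a small addition the paper leaves implicit.
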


\noindent \textit{Proof.} Define an auxiliary variable $\tilde \lambda _{k + 1}$:
\begin{align} \label{eq:tilde_lamdba}
\tilde \lambda _{k + 1}  &= \lambda _k  - \varepsilon _2 \nabla g(\lambda _k) \nonumber \\ 
  &= \lambda _k  - \varepsilon _2 \left( - B\nabla f^{\star} ( - B^T \lambda _k ) + b\right). 
\end{align}
Equation~\eqref{eq:tilde_lamdba} is a gradient descent step for the unconstrained problem $\mathop {\min }\limits_{\lambda \in \mathbb{R}^m}  {\rm{ }}{g(\lambda)}$. According to Theorem~\ref{theo:GD_stro} and Proposition \ref{prop:g_lambda}:
\begin{align}
\left\| {\tilde \lambda _{k + 1}  - \lambda ^* } \right\| \le \left(1 - \frac{\varepsilon _2 {\sigma _{\min }^2(B) }}{L}\right)\left\| {\lambda _k  - \lambda ^* } \right\|.     
\end{align}

On the other hand,
\begin{align*}
 \tilde \lambda _{k + 1}  - \lambda _{k + 1}  =& \lambda _k  - \varepsilon _2 \left( - B\nabla f^{\star} ( - B^T \lambda _k ) + b\right) - \lambda _k  + \varepsilon _2 ( - Bx_k  + b) \\
  =& \varepsilon _2 B\left( \nabla f^{\star} ( - B^T \lambda _k ) -x_k  \right). 
\end{align*}

Therefore, 
\begin{align*}
 \left\| {\lambda _{k + 1}  - \lambda ^* } \right\| \le& \left\| {\tilde \lambda _{k + 1}  - \lambda _{k + 1} } \right\| + \left\| {\tilde \lambda _{k + 1}  - \lambda ^* } \right\| \\ 
 \le& \left(1 - \frac{{ \varepsilon _2\sigma _{\min }^2(B) }}{L}\right)\left\| {\lambda _k  - \lambda ^* } \right\| + \varepsilon _2 \sigma _{\max }(B) \left\| {x_k - \nabla f^{\star} ( - B^T \lambda _k )  } \right\|.
\end{align*}

\begin{lem} \label{lem:tilde_x*}
In~\eqref{eq:primal}, assume $f \in \mathscr{S}_{\mu,L}$, let $\tilde f(x)=f(x) + x^T B^T \lambda _k$ and $\tilde x^*  = \arg \mathop {\min }\limits_{x \in {\mathbb{R}^{n}}} \tilde f(x)$, then $\tilde x^*   = \nabla f^{\star} ( - B^T \lambda _k)$ and as $k \to \infty$, $\tilde x^*$ tends to $x^*$.
\end{lem}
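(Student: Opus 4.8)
The plan is to split the statement into its two assertions: the algebraic identity $\tilde x^* = \nabla f^{\star}(-B^T\lambda_k)$, and the limiting claim $\tilde x^* \to x^*$ as $k\to\infty$. The first is a first-order optimality computation combined with the gradient-inversion property of the conjugate; the second is a continuity argument riding on the convergence $\lambda_k\to\lambda^*$.

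For the identity, first I would observe that $\tilde f$ differs from $f$ only by the linear term $x^T B^T\lambda_k$, so $\nabla^2\tilde f=\nabla^2 f\succeq\mu I_n$ and $\tilde f$ is $\mu$-strongly convex; hence $\tilde f$ is coercive, its minimizer $\tilde x^*$ exists and is unique, and it is characterized by the stationarity condition $\nabla\tilde f(\tilde x^*)=\nabla f(\tilde x^*)+B^T\lambda_k=\mathbf{0}$, i.e. $\nabla f(\tilde x^*)=-B^T\lambda_k$. Then I would invoke the standard conjugacy correspondence for a differentiable $f\in\mathscr{S}_{\mu,L}$: $\nabla f$ is a bijection of $\mathbb{R}^n$ whose inverse is $\nabla f^{\star}$, equivalently $x=\nabla f^{\star}(y)\iff y=\nabla f(x)$ (this follows from Fenchel's equality together with $f^{\star\star}=f$, and is consistent with the strong-convexity/Lipschitz parameters of $f^{\star}$ recorded just above via~\cite{rockafellar1970convex,kakade2009duality}). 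Applying it with $y=-B^T\lambda_k$ gives $\tilde x^*=\nabla f^{\star}(-B^T\lambda_k)$.

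For the limit, I would use that $f^{\star}$ has $\tfrac{1}{\mu}$-Lipschitz gradient, so $\lambda\mapsto\nabla f^{\star}(-B^T\lambda)$ is continuous (indeed $\tfrac{\sigma_{\max}(B)}{\mu}$-Lipschitz). The potential-function estimate~\eqref{conv:PDG} of Theorem~\ref{theo:PDG_conv} gives $V_k\to 0$ and hence $\lambda_k\to\lambda^*$, so continuity yields $\tilde x^*=\nabla f^{\star}(-B^T\lambda_k)\to\nabla f^{\star}(-B^T\lambda^*)$. Finally the optimality condition~\eqref{eq:optimality_x}, $\nabla f(x^*)+B^T\lambda^*=\mathbf{0}$, rewritten via the same correspondence as $x^*=\nabla f^{\star}(-B^T\lambda^*)$, identifies this limit with $x^*$. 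The only real subtlety—which I would flag in the write-up—is avoiding circularity: the convergence $\lambda_k\to\lambda^*$ used in the second part is exactly what Theorem~\ref{theo:PDG_conv} delivers, so this lemma is best read as a tool used inside that proof, with the limit claim understood relative to (or under the hypothesis of) the convergence of $\{\lambda_k\}$; everything else is routine bookkeeping about conjugate functions.
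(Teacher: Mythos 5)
Your argument is correct and follows essentially the same route as the paper: stationarity of $\tilde f$ gives $\nabla f(\tilde x^*)=-B^T\lambda_k$, the inverse relation between $\nabla f$ and $\nabla f^{\star}$ for $f\in\mathscr{S}_{\mu,L}$ gives $\tilde x^*=\nabla f^{\star}(-B^T\lambda_k)$, and the limit follows from $\lambda_k\to\lambda^*$ together with $x^*=\nabla f^{\star}(-B^T\lambda^*)$ from~\eqref{eq:optimality_x}. The circularity you rightly flag is resolved in the paper by citing~\cite{bertsekas1997nonlinear} for convergence of the primal--dual iterates rather than appealing to Theorem~\ref{theo:PDG_conv} itself, so the lemma's limit claim does not rest on the result it is used to prove.
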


\noindent \textit{Proof.} For fixed $\lambda_k$, the update rule of $x_k$~\eqref{alg:PDG_1}: $x_{k + 1}  = x_k  - \varepsilon _1 (\nabla f(x_k ) + B^T \lambda _k)$ is a gradient descent step for the unconstrained problem $\mathop {\min }\limits_{x \in {\mathbb{R}^{n}}} \tilde f(x)$. Function $\tilde f(x)$ has the same function parameters as $f(x)$, i.e., $\tilde f(x)$ is also $\mu$-strongly convex and has Lipschitz continuous gradient $L$. According to the optimality condition, we have $\nabla \tilde f(\tilde x^* ) = \nabla f(\tilde x^* ) + B^T \lambda _k  = 0$. Since the gradient $\nabla f^{\star}$ is the inverse of $\nabla f$~\cite{rockafellar1970convex}, we have $\tilde x^*  = \nabla f^{ - 1} ( - B^T \lambda _k ) = \nabla f^{\star} ( - B^T \lambda _k)$. Similarly, according to~\eqref{eq:optimality_x}, we have $ x^* = \nabla f^{\star} ( - B^T \lambda^*)$. Since the sequence $\{(x_k,\lambda_k)\}_{k \ge  0}$ generated by Algorithm~\ref{alg:PDG} converges to $(x^*,\lambda^*)$~\cite{bertsekas1997nonlinear}, $\tilde x^*$ tends to $x^*$.

\begin{prop} \label{prop:a_t+1}
If $0 < \varepsilon _1  \le \frac{2}{{L + \mu }}$, then $\left\| {x{}_{k + 1} - \nabla f^{\star} ( - B^T \lambda _{k + 1} )} \right\| \le $
\begin{align*}
\qquad  \frac{{ \varepsilon _2 \sigma _{\max }^3(B)}}{{\mu ^2 }}\left\| {\lambda _k  - \lambda ^* } \right\| +\left(1 - \mu \varepsilon _1  + \frac{{\sigma _{\max }^2(B) \varepsilon _2 }}{\mu }\right)\left\| {x_k  - \nabla f^{\star} ( - B^T \lambda _k )} \right\|.
\end{align*}
\end{prop}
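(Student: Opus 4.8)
The plan is to recognize the primal update~\eqref{alg:PDG_1} as a plain gradient-descent step on a ``$\lambda$-frozen'' objective, apply the strongly-convex contraction estimate of Theorem~\ref{theo:GD_stro}, and then pay for the fact that the reference point $\nabla f^{\star}(-B^T\lambda_k)$ itself moves when $\lambda$ is updated.

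\textbf{Step 1: contraction toward the frozen minimizer.} Fix $k$ and, following Lemma~\ref{lem:tilde_x*}, set $\tilde f(x)=f(x)+x^TB^T\lambda_k$ and $\tilde x_k^{*}=\arg\min_{x}\tilde f(x)=\nabla f^{\star}(-B^T\lambda_k)$. Since $\tilde f$ inherits the constants of $f$, i.e.\ $\tilde f\in\mathscr{S}_{\mu,L}$, the iteration~\eqref{alg:PDG_1} reads $x_{k+1}=x_k-\varepsilon_1\nabla\tilde f(x_k)$, which is exactly Algorithm~\ref{alg:GD} applied to $\tilde f$. With $0<\varepsilon_1\le 2/(L+\mu)$ (and $\mu\le L$ since $f\in\mathscr{S}_{\mu,L}$), Theorem~\ref{theo:GD_stro} gives $\|x_{k+1}-\tilde x_k^{*}\|\le(1-\mu\varepsilon_1)\|x_k-\tilde x_k^{*}\|$ with $1-\mu\varepsilon_1\ge 0$.

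\textbf{Step 2: moving the reference point.} By the triangle inequality, $\|x_{k+1}-\nabla f^{\star}(-B^T\lambda_{k+1})\|\le\|x_{k+1}-\tilde x_k^{*}\|+\|\nabla f^{\star}(-B^T\lambda_k)-\nabla f^{\star}(-B^T\lambda_{k+1})\|$. Because $f\in\mathscr{S}_{\mu,L}$, the conjugate $f^{\star}$ has $\tfrac1\mu$-Lipschitz gradient (the fact recalled just before Proposition~\ref{prop:g_lambda}), so the second term is at most $\tfrac{\sigma_{\max}(B)}{\mu}\|\lambda_{k+1}-\lambda_k\|$, and by the dual update~\eqref{alg:PDG_2} this equals $\tfrac{\varepsilon_2\sigma_{\max}(B)}{\mu}\|Bx_k-b\|$.

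\textbf{Step 3: bounding the primal residual and combining.} Using $Bx^{*}=b$ from~\eqref{eq:L_opt}, split $Bx_k-b=B(x_k-\tilde x_k^{*})+B(\tilde x_k^{*}-x^{*})$; the first summand has norm at most $\sigma_{\max}(B)\|x_k-\tilde x_k^{*}\|$, and for the second, $\tilde x_k^{*}-x^{*}=\nabla f^{\star}(-B^T\lambda_k)-\nabla f^{\star}(-B^T\lambda^{*})$ (using $x^*=\nabla f^\star(-B^T\lambda^*)$ as in Lemma~\ref{lem:tilde_x*}), whose norm is at most $\tfrac{\sigma_{\max}(B)}{\mu}\|\lambda_k-\lambda^{*}\|$ by the same Lipschitz bound. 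Hence $\|Bx_k-b\|\le\sigma_{\max}(B)\|x_k-\tilde x_k^{*}\|+\tfrac{\sigma_{\max}^2(B)}{\mu}\|\lambda_k-\lambda^{*}\|$. Substituting this into Step 2, adding the contraction of Step 1, and collecting the coefficients of $\|x_k-\nabla f^{\star}(-B^T\lambda_k)\|$ and $\|\lambda_k-\lambda^{*}\|$ reproduces the asserted inequality, where the step-size hypothesis $\varepsilon_1\le 2/(L+\mu)$ is what is needed for Step~1.

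\textbf{Main obstacle.} The delicate point is that Theorem~\ref{theo:GD_stro} only controls the distance to the \emph{old} target $\nabla f^{\star}(-B^T\lambda_k)$, whereas the statement requires the distance to the \emph{new} one $\nabla f^{\star}(-B^T\lambda_{k+1})$; quantifying the per-step drift of that target is precisely what produces the cross term in $\|\lambda_k-\lambda^{*}\|$ and the extra powers of $\sigma_{\max}(B)$, and it is where $f$ being $\mu$-strongly convex (equivalently $\nabla f^{\star}$ being $\tfrac1\mu$-Lipschitz) is indispensable. Everything else is routine manipulation with the triangle inequality and singular-value bounds.
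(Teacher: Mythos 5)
Your proposal is correct and follows essentially the same route as the paper's proof: a gradient-descent contraction toward the frozen minimizer $\nabla f^{\star}(-B^T\lambda_k)$ via Theorem~\ref{theo:GD_stro}, a triangle inequality to account for the drift of that reference point under the dual update, and a bound on $\|Bx_k-b\|$ by splitting through $\nabla f^{\star}(-B^T\lambda_k)$ and $x^*=\nabla f^{\star}(-B^T\lambda^*)$. Your Step 3 unpacks the Lipschitz bound on $\nabla g$ (Proposition~\ref{prop:g_lambda}) directly in terms of $\sigma_{\max}(B)$ and the $\tfrac{1}{\mu}$-Lipschitz gradient of $f^{\star}$, which is exactly how the paper establishes that bound, so the two arguments coincide.
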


\noindent \textit{Proof.} Using Lemma~\ref{lem:tilde_x*} and Theorem~\ref{theo:GD_stro}, if $0 < \varepsilon _1  \le \frac{2}{{L + \mu }}$, we have
\begin{align*}
\left\| {x_{k + 1}  - \nabla f^{\star} ( - B^T \lambda _k )} \right\| \le \left( 1 - \mu \varepsilon _1 \right) \left\| {x_k  - \nabla f^{\star} ( - B^T \lambda _k )} \right\|.
\end{align*}

According to the update rule of $\lambda_k$~\eqref{alg:PDG_2}, we have 
\begin{align*}
\left\| {\lambda _{k + 1}  - \lambda _k } \right\| &= \varepsilon _2 \left\| {b - Bx_k } \right\| \\ 
&\le \varepsilon _2 \left\| {b - B\nabla f^{\star} ( - B^T \lambda _k )} \right\| + \varepsilon _2 \left\| {B\left(\nabla f^{\star} ( - B^T \lambda _k ) - x_k \right)} \right\| \\ 
&\le \varepsilon _2 \left\| {\nabla g(\lambda _k ) - \nabla g(\lambda ^* )} \right\| + \varepsilon _2 \sigma _{\max }(B) \left\| {\nabla f^{\star} ( - B^T \lambda _k ) - x_k } \right\| \\ 
&\le \frac{{\varepsilon _2 \sigma _{\max }^2(B) }}{\mu }\left\| {\lambda _k  - \lambda ^* } \right\| + \varepsilon _2 \sigma _{\max }(B) \left\| {x_k - \nabla f^{\star} ( - B^T \lambda _k ) } \right\|. 
\end{align*}
Using Proposition~\ref{prop:b_t+1} and the inequality above, we have 
\begin{align*}
&\left\| {x{}_{k + 1} - \nabla f^{\star} ( - B^T \lambda _{k + 1} )} \right\| \\
\le &\left\| {x{}_{k + 1} - \nabla f^{\star} ( - B^T \lambda _k )} \right\| + \left\| {\nabla f^{\star} ( - B^T \lambda _{k + 1} ) - \nabla f^{\star} ( - B^T \lambda _k )} \right\| \\ 
  \le &\left( 1 - \mu \varepsilon _1 \right) \left\| {x_k  - \nabla f^{\star} ( - B^T \lambda _k )} \right\| + \frac{{\sigma _{\max }(B) }}{\mu }\left\| {\lambda _{k + 1}  - \lambda _k } \right\| \\ 
  \le &\left(1 - \mu \varepsilon _1  + \frac{{\sigma _{\max }^2(B) \varepsilon _2 }}{\mu }\right)\left\| {x_k  - \nabla f^{\star} ( - B^T \lambda _k )} \right\| + \frac{{\varepsilon _2 \sigma _{\max }^3(B)  }}{{\mu ^2 }}\left\| {\lambda _k  - \lambda ^*} \right\|.    
\end{align*}
Note that $\varepsilon_2$ should be chosen relatively small to make sure $1 - \mu \varepsilon _1  + \frac{{\sigma _{\max }^2(B) \varepsilon _2 }}{\mu } < 1$ so that the sequence $\left(\left\| {x_k  - \nabla f^{\star} ( - B^T \lambda _k )} \right\|\right)_{k \ge 0}$ is decreasing. 

Finally, we use a potential function $V(k)$ to add the error terms. Using Proposition~\ref{prop:b_t+1} and Proposition~\ref{prop:a_t+1}, we have 
\begin{align*}
V_{k + 1}  =& \gamma \left\| {x_{k + 1}  - \nabla f^{\star} ( - B^T \lambda _{k + 1} )} \right\| + \left\| {\lambda _{k + 1}  - \lambda ^* } \right\| \\ 
\le& \left(1 - \mu \varepsilon _1  + \frac{{\varepsilon _2 \sigma _{\max }^2(B) }}{\mu }\right)\gamma a_k + \frac{{\varepsilon _2 \sigma _{\max }^3(B) \gamma }}{{\mu ^2 }}b_k + \left(1 - \frac{{\sigma _{\min }^2(B) \varepsilon _2 }}{L}\right)b_k + \varepsilon _2 \sigma _{\max }(B) a_k \\ 
\le& \left(1 - \mu \varepsilon _1  + \frac{{\varepsilon _2 \sigma _{\max }^2(B) }}{\mu } + \frac{{\varepsilon _2 \sigma _{\max }(B) }}{\gamma }\right)\gamma a_k + \left(1 - \frac{{\varepsilon _2 \sigma _{\min }^2(B) }}{L} + \frac{{\varepsilon _2 \gamma \sigma _{\max }^3(B) }}{{\mu ^2 }}\right)b_k\\
\le& cV_k. 
\end{align*}


Note that there is an upper limit for $\gamma$, i.e., $\gamma < \frac{\mu^2 \sigma_{\min}^2(B)}{L \sigma_{\max}^3(B)}$. We can choose large $\gamma$ and $\varepsilon_1$ (approaching their upper limits) and small $\varepsilon_2$ to make sure $c_1, c_2<1$ holds. 


\ifCLASSOPTIONcaptionsoff
  \newpage
\fi

\end{spacing}
\end{document}